\documentclass[letterpaper, 10 pt, conference]{ieeeconf} 
\IEEEoverridecommandlockouts
\usepackage{cite}
\usepackage[cmex10]{amsmath}
\usepackage{array}
\usepackage{moreverb}
\usepackage{algorithm,algorithmic}
\usepackage{arrayjobx}
\makeatletter
\let\NAT@parse\undefined
\makeatother
\usepackage[colorlinks,citecolor=darkblue,urlcolor=red, linkcolor=black, hyperfigures]{hyperref}
\usepackage{amsmath,amssymb}
\usepackage{times}
\usepackage{graphicx}
\usepackage{subfigure}
\usepackage{setspace}
\usepackage{soul, xcolor}
\usepackage{float}
\usepackage{indentfirst}
\usepackage{bm}
\usepackage{booktabs}
\usepackage{flushend}
\usepackage{balance}
\usepackage[export]{adjustbox}
\usepackage{caption}
\usepackage{enumerate}
 
\usepackage{algorithmic}

\usepackage{booktabs}
\usepackage{tabularx}
\usepackage{array}
 
\makeatletter
\newcommand{\removelatexerror}{\let\@latex@error\@gobble}
\makeatother
\newtheorem{example}{Example}
\newtheorem{theorem}{Theorem}
\newtheorem{lemma}{Lemma}

\newtheorem{corollary}{Corollary}
\newtheorem{definition}{Definition}
\newtheorem{proposition}{Proposition}
\newtheorem{problem}{Problem}
\newtheorem{remark}{Remark}
\newtheorem{claim}{Claim}
\newtheorem{assumption}{Assumption}
\newcommand{\bdefinition}{\begin{definition} \begin{rm} }
\newcommand{\edefinition}{ \end{rm} \hfill \rule{1.5mm}{1.5mm}
\end{definition} }
\newcommand{\bremark}{\begin{remark} \begin{rm} }
\newcommand{\eremark}{ \end{rm} \hfill \rule{1.5mm}{1.5mm}
\end{remark} }
\newcommand{\btheorem}{\begin{theorem}  \begin{rm} }
\newcommand{\etheorem}{ \end{rm} \hfill \rule{1.5mm}{1.5mm}
\end{theorem} }
\newcommand{\blemma}{\begin{lemma} \begin{rm} }
\newcommand{\elemma}{ \end{rm} \hfill \rule{1.5mm}{1.5mm}
\end{lemma} }
\newcommand{\bcorollary}{\begin{corollary} \begin{rm} }
\newcommand{\ecorollary}{ \end{rm} \hfill \rule{1.5mm}{1.5mm}
\end{corollary} }
\newcommand{\bproposition}{\begin{proposition} \begin{rm} }
\newcommand{\eproposition}{ \end{rm} \hfill \rule{1.5mm}{1.5mm}
\end{proposition} }
\newcommand{\bclaim}{\begin{claim} \begin{rm} }
\newcommand{\eclaim}{ \end{rm} \hfill \rule{1.5mm}{1.5mm}
\end{claim} }
\newcommand{\bproblem}{\begin{problem} \begin{rm} }
\newcommand{\eproblem}{\end{rm} \end{problem} }
\newcommand{\bassumption}{\begin{assumption} \begin{rm} }
\newcommand{\eassumption}{ \end{rm} \hfill \rule{1.5mm}{1.5mm}
\end{assumption} }

\definecolor{darkblue}{rgb}{0.0, 0.0, 0.55}

\usepackage[textwidth=1.9cm,color=green!10,textsize=footnotesize]{todonotes}

\def\BibTeX{{\rm B\kern-.05em{\sc i\kern-.025em b}\kern-.08em
    T\kern-.1667em\lower.7ex\hbox{E}\kern-.125emX}}

\title{\Huge \bf
Control Synthesis against LTL Specifications with Long-Run Visit Proportion Objectives
}

\author{Zhiyuan Huang, Zhao Tong, Jiakai Li, Chenrui Xiang, and Bingzhuo Zhong\textsuperscript{*}
\thanks{This work was supported by the National Natural Science Foundation of China (Grant No. 62606433), the Guangdong Basic and Applied Basic Research Foundation (No:2026A1515010222), and the Guangdong Provincial Project (No. 2024QN11X053). (Corresponding Author: Bingzhuo Zhong)}
\thanks{
Zhiyuan Huang, Zhao Tong, Jiakai Li, Chenrui Xiang and Bingzhuo Zhong are with the Thrust of Artificial Intelligence, The Hong Kong University of Science and Technology (Guangzhou), Guangzhou 511400, China. (e-mail: \{zhuang655, ztong837, jli196, cxiang359\}@connect.hkust-gz.edu.cn, bingzhuoz@hkust-gz.edu.cn). 
}
}

\begin{document}

\maketitle

\begin{abstract}
This paper investigates the path-planning problem for systems required to satisfy a linear temporal logic (LTL) specification while achieving a desired long-run visit proportion. For a path represented in prefix–suffix structure, the long-run visit proportion quantifies the asymptotic occurrence proportion of an atomic proposition (AP) sequence of interest in the suffix trace. Such a quantitative requirement generally cannot be expressed by standard LTL specifications. Furthermore, we develop a planning approach that synthesizes an LTL-satisfying path whose long-run visit proportion remains within a prescribed tolerance of a desired value while satisfying an overall cost constraint. By adjusting the desired proportion, the synthesized path can allocate more or less long-run attention to the atomic proposition sequence of interest, thereby improving the flexibility and efficiency of the task execution. Finally, experiments on a quadruped robot demonstrate the practical significance of the proposed long-run visit proportion and the effectiveness of the proposed planning approach.
\end{abstract}
\begin{keywords}
Formal verification/synthesis, Hybrid systems, Discrete event systems
\end{keywords}

\section{Introduction}
Path planning is a fundamental problem in cyber-physical systems, aiming to synthesize a feasible path that guides a system from an initial state toward a desired goal while satisfying environmental and system constraints. Classical planning problems typically focus on low-level objectives, such as obstacle avoidance and point-to-point navigation. However, as the requirements imposed on cyber-physical systems become increasingly complex and diverse, these low-level objectives alone are often insufficient to describe sophisticated tasks. Thus, high-level task-planning methods are needed to specify complex objectives and synthesize paths or strategies that satisfy the task requirements.

Temporal-logic-based planning has attracted considerable attention because of its expressive power and formal guarantees. In particular, LTL is widely used to specify high-level tasks over infinite sequences of atomic propositions. Classical automata-theoretic approaches translate an LTL specification into an equivalent automaton and search for an accepting system path. Building on this framework, existing studies have investigated optimal \cite{smith2011optimal}, sampling-based \cite{kantaros2018sampling}, abstraction-free \cite{luo2021abstraction}, and Petri-net-based planning methods \cite{lv2023optimal}, as well as extensions to uncertain environments and soft task constraints \cite{guo2018probabilistic,cai2021optimal}. Nevertheless, these studies mainly focus on evaluating the satisfaction of LTL tasks and on optimizing the associated path cost without considering other constraints.


Standard LTL specifies qualitative properties of individual system executions but cannot directly express many quantitative or relational requirements encountered in real-world applications. To specify properties involving relationships among multiple executions, the notion of hyperproperties was introduced and can be formally expressed using hyper-temporal logics \cite{clarkson2014temporal}. For example, HyperLTL has been employed to characterize robustness \cite{wang2020hyperproperties} and opacity-related security properties \cite{yang2020secure,yu2022security}. However, hyper-temporal logics primarily characterize interrelations among multiple executions and do not directly provide quantitative properties within an individual execution.

In recent years, several studies have investigated quantitative visit requirements with LTL specifications. Co-Büchi conditions impose finite upper bounds on visits to a given predicate along infinite executions \cite{murali2023co}, while counting LTL extends this idea to multi-agent settings by constraining how many agents satisfy a proposition simultaneously \cite{qiu2026compressing,sahin2019multirobot}. 
Frequency LTL relaxes the standard until operator by requiring its left-hand-side formula to hold at least at a prescribed proportion of positions before the right-hand-side formula is satisfied \cite{bollig2012frequency}, whereas mean-payoff formulations characterize the long-run average of predefined weights under LTL specifications \cite{bohy2013synthesis}.
Moreover, \cite{chen2026optimal} introduced a long-run efficiency objective, defined as the ratio between accumulated rewards and costs, while ensuring the satisfaction of an LTL task in uncertain environments. 
In stochastic systems, steady-state policy-synthesis methods impose asymptotic state-visit constraints together with LTL specifications \cite{velasquez2024controller,atia2020steady}. Although these studies provide different mechanisms for specifying quantitative long-run behavior under LTL specifications, they do not directly address prescribed long-run visit proportions of specified finite AP sequences. 
These sequences capture task execution patterns with particular AP combinations and orderings. Regulating their proportions allows the planner to adjust the emphasis on these patterns according to operational needs while satisfying the same LTL specification.

To address this gap, we investigate path planning under LTL specifications with an additional quantitative requirement on long-run visit behavior. The main contributions of this paper are summarized as follows.
\begin{enumerate}
\item We introduce the notion of the long-run visit proportion, which quantifies the asymptotic occurrence proportion of a specified atomic-proposition sequence in the suffix trace of a prefix–suffix path. Based on this notion, we formulate a path-planning problem that jointly considers LTL satisfaction, an overall cost constraint, and a prescribed long-run visit proportion.
\item We develop an automata-based synthesis approach that finds a prefix–suffix path satisfying the given LTL specification and overall cost constraint while keeping the deviation from the desired long-run visit proportion within a prescribed tolerance. The correctness and optimality of the proposed approach are further analyzed.
\end{enumerate}
Experiments on a quadruped robot further demonstrate
the effectiveness of the proposed approach in achieving
desired long-run visit proportions while satisfying
LTL specifications and overall cost constraints.

Existing studies on steady-state visit constraints
\cite{atia2020steady, velasquez2024controller}
are conceptually related to our work.
However, existing steady-state formulations typically concern the visit proportions of individual states or state sets, whereas our notion can characterize the occurrence proportion of a specified finite sequence of atomic propositions along a prefix--suffix path. It thereby preserves the ordering information among atomic propositions and enables a desired sequence-level visit pattern to be specified directly.

The remainder of this paper is organized as follows. Section \ref{sec: preliminary} introduces the system model, LTL specifications, and Büchi automata. Section \ref{sec: problem formulation} presents a motivating example, defines the long-run visit proportion, and formulates the problem. Section \ref{sec: solution} develops the proposed solution approach. Section \ref{sec: experiment} demonstrates the application of the proposed notion and evaluates the effectiveness of the approach. Finally, Section \ref{sec: conclude} concludes the paper.

\section{Preliminary} \label{sec: preliminary}
This section reviews the abstractions of the system model and task specification. We use $\mathbb{R}$, $\mathbb{R}_{>0}$, $\mathbb{N}$, and $\mathbb{N}_{>0}$ to denote the set of real numbers, positive real numbers, natural numbers, and positive natural number respectively.
$\mathbb{R}^{n}$ denotes the $n$-dimensional vector space.
In analogy with the Kleene star and $\omega$-words, the superscripts $*$ and $\omega$
denote finite and infinite executions, respectively, while the superscript $*\omega$ denotes their union.
Given a finite sequence $w = w(0)w(1)\cdots w(k)$, $|w|$ denotes the length of $w$, i.e., the number of elements in the sequence.

\subsection{Motion ability abstraction}
\begin{definition} \label{def: WTS}
Consider a system work in a workspace $\Pi \subseteq \mathbb{R}^{n}$, which is partitioned as $N \in \mathbb{N}$ regions $\Pi =\left \{ \pi_{1}, \dots, \pi_{N}\right \} $.
The motion ability of the system is abstracted as a finite weighted transition system (WTS) defined as
\begin{equation} \label{eq: wTS}
    T_{} := (\Pi_{}, \Pi_{0}, U, \to_{}, w, AP, L ),
\end{equation}
where $\Pi_{}$ is a finite set of states representing all regions, 
$\Pi_{0} \subseteq \Pi_{}$ is the set of initial states, $U$ is the set of control inputs,
$\to_{} \subseteq \Pi_{} \times U \times \Pi_{}$ is the transition relation, $w: \Pi \times U \times  \Pi \to \mathbb{R}_{>0}$ is a cost function indicating the moving cost between two regions,
$AP$ is a set of atomic propositions having properties of system interests,
and $L: \Pi \to 2^{AP}$ is the labeling function that denotes the properties at a specific region.
The transition system $T$ is assumed to be deterministic, i.e., for any $\pi \in \Pi$ and $u \in U$, there exists at most one $\pi^{+} \in \Pi$ such that $(\pi,u,\pi^{+}) \in \to$.
\hfill \rule{1.5mm}{1.5mm}
\end{definition}

An infinite path of $T$ is an infinite sequence of states
$\tau=\pi(0)\pi(1)\cdots$ such that, for every
$k\in\mathbb{N}$, there exists $u(k)\in U$ satisfying
$(\pi(k),u(k),\pi(k+1))\in{\to}$.
A finite path of $T$ is a finite sequence of states
$\pi(0)\cdots\pi(n)$, with $n\in\mathbb{N}$,
satisfying the same transition condition for $0\le k<n$.
The sets of all finite and infinite paths of $T$ are
denoted by ${Path}^{*}(T)$ and
${Path}^{\omega}(T)$, respectively.
The trace of an infinite path $\tau$ is the infinite
sequence of atomic proposition sets
$\operatorname{trace}(\tau)
=L(\pi(0))L(\pi(1))\cdots\in(2^{AP})^\omega$.
Next, we adopt a prefix–suffix structure to represent the infinite path of the WTS, which enables tractable computation and optimization.
\begin{definition} \label{def: pre-suf struct}
    An infinite path $\tau =\pi(0)\pi(1) \cdots \in Path^{\omega}(T) $ is said to admit a prefix–suffix structure if there exist $n,l \in \mathbb{N}$, the path can be rewritten as 
    \begin{equation}
        \tau =\pi(0)\pi(1) \cdots\pi(n-1) \big(\pi(n)\cdots\pi(n+l)\big)^{\omega} 
    \end{equation}
    where $\tau_{\mathrm{pre}}:=\pi(0) \cdots\pi(n)$ is the prefix structure from an initial state to a state $\pi(n)$, and $\tau_{\mathrm{suf}}:= \pi(n)\cdots\pi(n+l)$ is the suffix structure, which forms a cycle starting at $\pi(n)$, with repeated states allowed.
    For convenience, we write the prefix--suffix structure as $\tau
=\tau_{\mathrm{pre}}\odot \tau_{\mathrm{suf}}^{\omega}$ where $\odot$ denotes path concatenation with the common boundary state
$\pi(n)$ counted only once. 
    In particular, there exists $u(n+l) \in U$ such that $(\pi(n+l),u(n+l) ,\pi(n)) \in \to$, so that the last state of the suffix transitions back to its first
    state.
    Moreover, the cost of the infinite path satisfying the prefix-suffix structure is defined as:
    \begin{equation}
        J(\tau) = \sum_{k=0}^{n+l}w(\pi(k),u(k), \pi(k+1)),
    \end{equation}
    where $w: \Pi \times U \times  \Pi \to \mathbb{R}_{>0}$ is the cost function of WTS introduced in Definition \ref{def: WTS}.
    \hfill \rule{1.5mm}{1.5mm}
\end{definition}

\subsection{Task Specification} \label{sec: LTL}
The high-level tasks we consider in this paper are described as linear temporal logic (LTL) tasks.
Concretely, an LTL task consists of a set of atomic propositions $AP$ and several Boolean and temporal operators, the syntax of which is defined as follows:
\begin{equation*}
    \varphi  ::= True \mid a \mid \varphi_{1} \wedge \varphi_{2} \mid \neg \varphi \mid X \varphi \mid \varphi_{1} U \varphi_{2},
\end{equation*}
where $a \in AP$ is an atomic proposition, and $X$, $U$ denote ‘‘next” and ‘‘until”, respectively. 
The above syntax can also induce the operator such as $\Diamond$ (‘‘eventually”), $\Box $ (“always”) and $\Rightarrow $ (‘‘implication”), where $\Diamond \varphi = True U \varphi$, $\Box \varphi = \neg \Diamond \neg \varphi$ and $\varphi_{1} \Rightarrow \varphi_{2} = \neg \varphi_{1} U \varphi_{2}$.

LTL formulas are used to evaluate whether an infinite word satisfies some properties or not.
We denote $words(\varphi) = \left \{ \sigma \in (2^{AP})^{\omega }\mid \sigma \models \varphi  \right \} $ the set of all infinite words that satisfy the LTL formula $\varphi$, where $\models \subseteq (2^{AP})^{\omega } \times  \varphi$ is the satisfaction relation.

Given an LTL formula, there always exists a corresponding nondeterministic B{\"u}chi automaton (NBA) over $\Sigma =2^{AP}$ which is defined as follows \cite{baier2008principles}. 

\begin{definition} \label{def: buchi}
An NBA $\mathcal{A}_{\varphi}$ corresponding to $\varphi$ is a quintuple $\mathcal{A}_{\varphi} = (Q,\Sigma, \delta, Q_{0}, F )$, where $Q$ is a finite set of states; $ Q_{0} \subseteq Q$ is the set of initial states; $\Sigma =2^{AP}$ is the alphabet; $\delta: Q \times 2^{AP} \to 2^{Q}$ is a nondeterministic transition function and $F \subseteq Q$ is a set of accepting states.   
\hfill \rule{1.5mm}{1.5mm}
\end{definition}

We next introduce the notions of runs, accepting runs, and the language
accepted by an NBA.

\begin{definition}[Run and B{\"u}chi acceptance]
\label{def: buchi acceptance}
Consider an NBA $\mathcal{A}=(Q,\Sigma,\delta,Q_0,F)$ and an infinite
word $\sigma=a(0)a(1)a(2)\cdots\in\Sigma^\omega$. An infinite state
sequence $R=q(0)q(1)q(2)\cdots\in Q^\omega$ is called a run of
$\mathcal{A}$ over $\sigma$ if $q(0)\in Q_0$ and
$q(k+1)\in\delta(q(k),a(k))$ for every $k\in\mathbb{N}$.

Let $\operatorname{Inf}(R)\subseteq Q$ be the set of states that
occur infinitely often in $R$. The run $R$ is an accepting run if
$\operatorname{Inf}(R)\cap F\neq\emptyset$. The word $\sigma$ is
accepted by $\mathcal{A}$ if there exists at least one accepting run of
$\mathcal{A}$ over $\sigma$. The language accepted by $\mathcal{A}$ is
denoted by $\mathcal{L}^{\omega}(\mathcal{A})\subseteq\Sigma^\omega$ and
is defined as the set of all infinite words accepted by $\mathcal{A}$.
\hfill \rule{1.5mm}{1.5mm}
\end{definition}

The following lemma establishes the correspondence between
LTL formulas and nondeterministic B{\"u}chi automata.

\begin{lemma}[LTL-to-NBA translation]
\label{lem: ltl nba}
For every LTL formula $\varphi$ over $AP$, there exists an NBA
$\mathcal{A}_{\varphi}=(Q,2^{AP},\delta,Q_0,F)$ such that
$\operatorname{Words}(\varphi)
=\mathcal{L}^{\omega}(\mathcal{A}_{\varphi})$. Equivalently, for every
infinite word $\sigma\in(2^{AP})^\omega$, it holds that
$\sigma\models\varphi$ if and only if
$\sigma\in\mathcal{L}^{\omega}(\mathcal{A}_{\varphi})$.
\hfill \rule{1.5mm}{1.5mm}
\end{lemma}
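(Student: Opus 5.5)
The plan is the classical tableau (generalized Büchi) construction of \cite{baier2008principles}, followed by the standard degeneralization step. The proof is by construction and goes through closure sets.

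\textbf{Step 1: closure and elementary sets.} Define the closure $\mathrm{cl}(\varphi)$ as the set of all subformulas of $\varphi$ together with their negations, identifying $\neg\neg\psi$ with $\psi$. Call $B\subseteq\mathrm{cl}(\varphi)$ \emph{elementary} if it satisfies three conditions.
\begin{enumerate}
\item It is propositionally consistent and maximal: $\psi\in B$ iff $\neg\psi\notin B$, $\psi_1\wedge\psi_2\in B$ iff both conjuncts are in $B$, and $True\in B$ whenever $True\in\mathrm{cl}(\varphi)$.
\item It is locally consistent for until: $\psi_2\in B$ implies $\psi_1 U\psi_2\in B$, and $\psi_1 U\psi_2\in B$ with $\psi_2\notin B$ implies $\psi_1\in B$.
\end{enumerate}

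\textbf{Step 2: the generalized automaton.} Build a GNBA with the following components.
\begin{enumerate}
\item States: the elementary sets.
\item Initial states: the elementary sets containing $\varphi$.
\item Transitions: $B'\in\delta(B,A)$ iff $A=B\cap AP$ and the expansion laws hold. These require $X\psi\in B\Leftrightarrow\psi\in B'$. They also require $\psi_1U\psi_2\in B\Leftrightarrow\big(\psi_2\in B\vee(\psi_1\in B\wedge \psi_1U\psi_2\in B')\big)$.
\item Acceptance: one set $F_{\psi_1U\psi_2}=\{B:\psi_1U\psi_2\notin B \text{ or } \psi_2\in B\}$ for each until subformula.
\end{enumerate}

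\textbf{Step 3: language equality.} The inclusion $\operatorname{Words}(\varphi)\subseteq\mathcal{L}^{\omega}$ is direct. Given $\sigma=A_0A_1\cdots\models\varphi$, set $B_i=\{\psi\in\mathrm{cl}(\varphi):A_iA_{i+1}\cdots\models\psi\}$. By the LTL semantics each $B_i$ is elementary, consecutive sets satisfy the expansion laws, and $\varphi\in B_0$. Each $F_{\psi_1U\psi_2}$ is visited infinitely often: if $\psi_1U\psi_2\in B_i$, then $\psi_2$ holds at some later position $j$, and $B_j$ lies in the acceptance set.

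For the converse, take an accepting run $B_0B_1\cdots$. Prove by structural induction on $\psi\in\mathrm{cl}(\varphi)$ that $\psi\in B_i$ iff $A_iA_{i+1}\cdots\models\psi$, simultaneously for all $i$. The atomic, Boolean and next cases follow from elementarity and the transition rules.

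The main obstacle is the until case, direction ``$\Rightarrow$''. The expansion law alone only propagates the obligation $\psi_1U\psi_2$ forward and could postpone $\psi_2$ forever. Suppose $\psi_1U\psi_2\in B_i$. If $\psi_2$ never entered any $B_j$ for $j\ge i$, the expansion law would keep $\psi_1U\psi_2$ in all $B_j$ with $\psi_2\notin B_j$. Then no $B_j$ with $j\ge i$ lies in $F_{\psi_1U\psi_2}$, contradicting generalized Büchi acceptance. So some $\psi_2\in B_j$ exists, and $\psi_1\in B_k$ for $i\le k<j$. The induction hypothesis then gives $A_iA_{i+1}\cdots\models\psi_1U\psi_2$. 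The direction ``$\Leftarrow$'' uses maximality together with the contrapositive of the same argument.

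\textbf{Step 4: degeneralization.} Convert the GNBA with acceptance sets $F_1,\dots,F_k$ into an NBA by taking $k$ copies of the state space. Stay in copy $j$ until an $F_j$-state is visited, then move to copy $j+1 \bmod k$. Take the accepting set to be $F_1$ in copy $1$. If $k=0$, take all states as accepting. A run visits this set infinitely often iff it visits every $F_j$ infinitely often, which preserves the language. This gives $\mathcal{A}_\varphi$ with $\operatorname{Words}(\varphi)=\mathcal{L}^{\omega}(\mathcal{A}_\varphi)$, and the pointwise equivalence in the statement is a restatement of this set equality.
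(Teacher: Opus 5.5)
Your construction is the standard tableau (GNBA) construction followed by degeneralization from \cite{baier2008principles}. That is exactly what the paper relies on: it states this lemma without proof and defers to that reference. The route is right, but one step fails for the lemma as stated.

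Your transition rule demands $A=B\cap AP$. Since $B\subseteq\mathrm{cl}(\varphi)$, every letter read can then contain only propositions that occur in $\varphi$. The lemma, however, concerns an arbitrary $\varphi$ over the full alphabet $2^{AP}$, and the paper needs this, since the product reads arbitrary labels $L(\pi)\subseteq AP$. Suppose some $a\in AP$ does not occur in $\varphi$ and $a\in A_i$. Then your sets $B_i=\{\psi\in\mathrm{cl}(\varphi): A_iA_{i+1}\cdots\models\psi\}$ satisfy $B_i\cap AP\neq A_i$, so $B_0B_1\cdots$ is not a run and the inclusion $\operatorname{Words}(\varphi)\subseteq\mathcal{L}^{\omega}$ breaks. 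For instance, with $AP\neq\emptyset$ and $\varphi=True$, your automaton accepts only $\emptyset^\omega$. The repair is local: require $A\cap\mathrm{cl}(\varphi)=B\cap AP$ in the transition rule, or add every $a\in AP$ and $\neg a$ to the closure. The rest of your argument then goes through unchanged.

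A minor point: the ``$\Leftarrow$'' direction of the until case needs no acceptance condition, so it is not a contrapositive of the acceptance argument. Given a witness $j\ge i$, the induction hypothesis yields $\psi_2\in B_j$ and $\psi_1\in B_k$ for $i\le k<j$. Local consistency puts $\psi_1U\psi_2$ in $B_j$, and the expansion law propagates it back to $B_i$.
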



\section{Problem formulation}\label{sec: problem formulation}
\subsection{Motivation}
As illustrated in Section \ref{sec: LTL}, LTL is capable of specifying desired properties over infinite paths, such as safety, reachability, and recurrence. These specifications are inherently qualitative, focusing on whether certain events occur or whether specific patterns are eventually or repeatedly satisfied.

However, LTL lacks the ability to capture preferences over infinite behaviors. In particular, it cannot express requirements on the long-run visit proportion of atomic propositions or specific patterns along an infinite trajectory.
In many practical control synthesis and path planning problems, adjusting the long-run visit proportion allows one to encode preferences between objectives, balance resource usage, or improve operational efficiency. 
This limitation of LTL is further illustrated by the following example.
\begin{example} \label{eg: motivation}
    Consider a robot operating in a workspace as illustrated in Fig. \ref{fig: motivating example}. The transitions of the weighted transition system (WTS) correspond to the connectivity between states in the workspace abstraction, and the cost associated with each transition represents the required energy consumption.
   Suppose the initial state of the WTS is $\Pi_{0} = {q_0}$, and the LTL specification is given by $\varphi =\Box \Diamond \text{gather} ~\wedge ~\Box \Diamond \text{recharge} ~\wedge ~\Box \Diamond \text{upload}$. In plain English, this specification requires the robot to visit regions labeled with “gather,” “recharge,” and “upload” infinitely often.
    Assume that the robot has a total energy capacity of 20 units. Whenever the robot visits a region labeled “recharge”, its energy is fully restored. 
    We conservatively require the total energy consumption per cycle to be no greater than the battery capacity.
One possible infinite path satisfying $\varphi$ is
    $\tau_1=(q_0q_1q_0q_2q_1q_0q_2q_3q_0q_2q_3)^\omega.$
    The path $\tau_1$ also satisfies the per-cycle energy
constraint, with an energy consumption of 16 units
per cycle.
    
Now consider a scenario in which gathering involves a
sustained sensing operation, such as inspecting equipment
within a designated region. To begin such operations with
a larger available energy reserve, it is desirable for
the robot to proceed to the gathering region directly
after recharging, via the transit region $q_0$, without
first performing an upload task.
This behavior corresponds to the state sequence
$q_3q_0q_2$, whose label sequence is
$(\{\text{recharge}\},\emptyset,\{\text{gather}\})$.
Along $\tau_1$, resulting in a visit proportion of $3/11$ per cycle.
To increase this proportion, consider an alternative path
$\tau_2=
(q_0q_2q_1q_0q_1q_0q_2q_3q_0q_2q_3)^\omega,$
which also satisfies $\varphi$ and respects the energy
constraint. Along $\tau_2$, the visit proportion of the sequence $q_3q_0q_2$ increases to $6/11$, so that gathering is more frequently initiated directly after recharging.
However, both paths have identical state visit proportions:
$4/11$, $2/11$, $3/11$, and $2/11$ for
$q_0,q_1,q_2,q_3$, respectively.
Thus, this improvement in the desired task ordering
cannot be distinguished by individual state visit
proportions alone.

\hfill \rule{1.5mm}{1.5mm}
\end{example}
From Example \ref{eg: motivation}, both $\tau_{1}$ and $\tau_{2}$ satisfy the LTL specification. However, in practical scenarios, while the LTL specification remains unchanged, the desired performance objectives may vary, requiring different visit proportions to improve operational efficiency. 
Although standard LTL can express
qualitative ordering requirements, it does not in general
specify prescribed long-run occurrence proportions.
Moreover, individual state or state-set visit proportions
alone do not capture the distinction illustrated above.
This motivates planning with visit proportions of
specified finite AP sequences while satisfying the
LTL task and cost constraints.

\begin{figure}[htbp]
	\centering
    \includegraphics[width=0.45\textwidth]{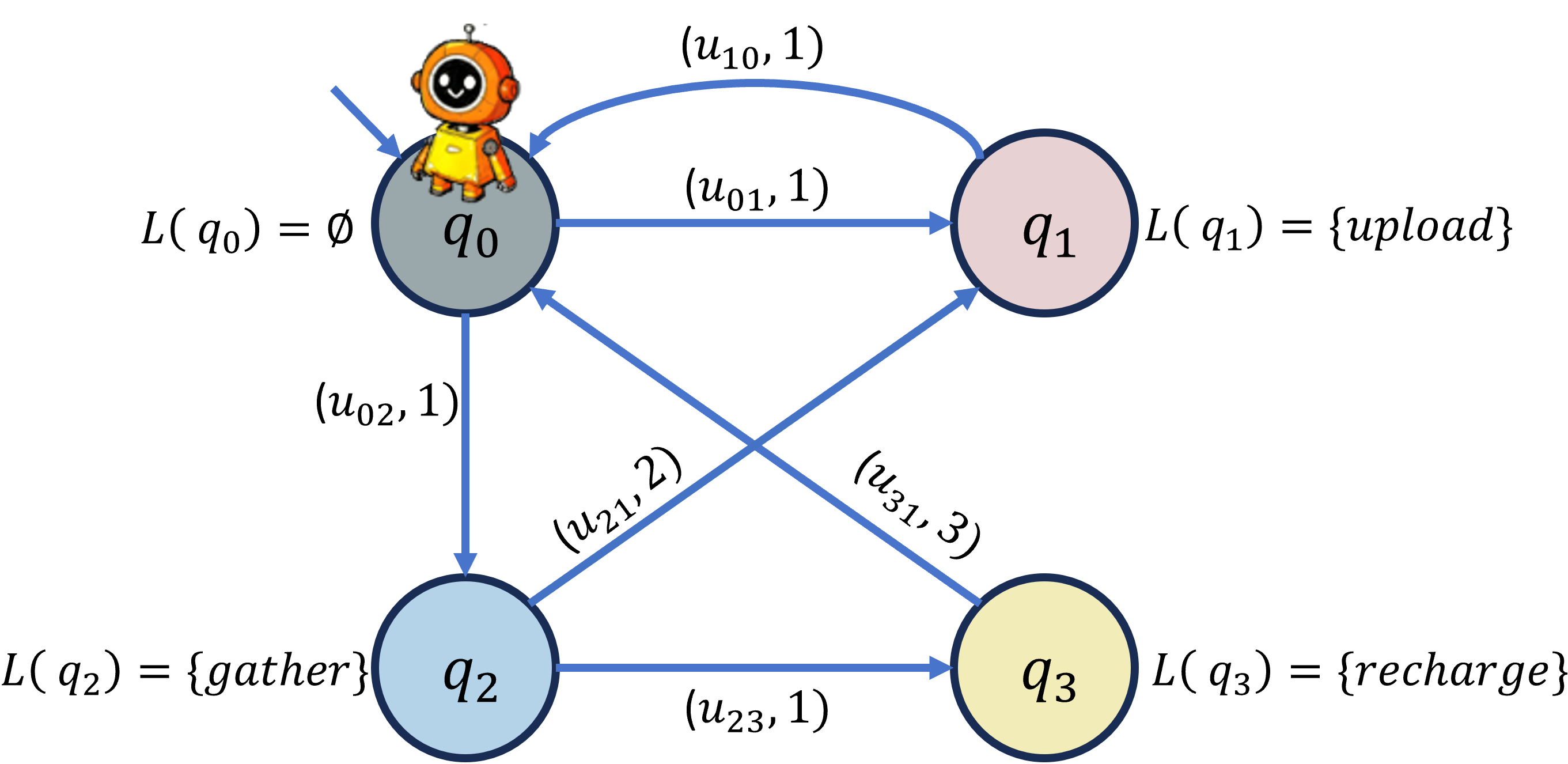}
    \vspace{-4pt}
	\caption{The robot and its workspace in Example \ref{eg: motivation}.}
    \vspace{-5pt}
	\label{fig: motivating example}
\end{figure}

\subsection{Long-run visit proportion and problem formulation}
Motivated by the limitations of LTL illustrated in Example \ref{eg: motivation}, this paper focuses on path planning for LTL tasks under long-run visit proportion requirements and overall cost constraints.
Before formally introducing the notion of long-run visit proportion, we first define the concept of AP sequence of interest and its maximum occurrence number on an infinite path over a prefix-suffix structure.
\begin{definition}\label{def: occ num}
Consider an infinite path
\[
\tau
=
\pi(0)\cdots\pi(n-1)
\big(\pi(n)\cdots\pi(n+l)\big)^\omega
\]
with a prefix--suffix structure as in
Definition~\ref{def: pre-suf struct}.
Its finite suffix trace is
\[
\mathrm{trace}(\tau_{\mathrm{suf}})
:=
L(\pi(n))\cdots L(\pi(n+l)),
\]
whose positions are indexed by $0,\ldots,l$.
Define its infinite periodic extension by
\[
\mathrm{trace}(\tau_{\mathrm{suf}})^\omega(i)
:=
\mathrm{trace}(\tau_{\mathrm{suf}})
\bigl(i\bmod(l+1)\bigr),
 i\in\mathbb N.
\]

Let $v:=a(0)\cdots a(n_v)$ be a finite AP sequence
of interest, where $n_v\in\mathbb N$ and
$a(k)\subseteq AP$ for every $k=0,\ldots,n_v$.
The set of starting positions within one suffix period
at which $v$ occurs cyclically is
\[
\mathcal O
:=
\left\{
i\in\{0,\ldots,l\}
\;\middle|\;
\begin{array}{l}
a(k)\subseteq
\mathrm{trace}(\tau_{\mathrm{suf}})^\omega(i+k),\\
\forall k=0,\ldots,n_v
\end{array}
\right\}.
\]  

The maximum number of non-overlapping cyclic occurrences
of $v$ in $\mathrm{trace}(\tau_{\mathrm{suf}})$ is
\[
\mathrm{num}\bigl(v,\mathrm{trace}(\tau_{\mathrm{suf}})\bigr)
:=
\max_{\mathcal I\subseteq\mathcal O}|\mathcal I|,
\]
subject to
\[
\begin{aligned}
&\{(i+k)\bmod(l+1)\mid k=0,\ldots,n_v\}\cap\\
&
\{(j+k)\bmod(l+1)\mid k=0,\ldots,n_v\}
=\emptyset,
\forall i,j\in\mathcal I,\ i\ne j.
\end{aligned}
\]
The empty set is admissible, so the maximum is zero
when $\mathcal O=\emptyset$.
\hfill \rule{1.5mm}{1.5mm}
\end{definition}
Intuitively, from Definition \ref{def: occ num}, the AP sequence of interest $v$ is a finite sequence of subsets of $AP$ that is desired to occur in the suffix trace.
Moreover, the maximum occurrence number $\mathrm{num}(v,\operatorname{trace}(\tau_{\mathrm{suf}}))$ denotes the maximum number of non-overlapping occurrences of the sequence $v$ in the suffix trace.
Based on Definition \ref{def: occ num}, we next define the long-run visit proportion over an infinite path with a prefix–suffix structure.

\begin{definition}\label{def: long run visit}
    Consider an infinite path
\begin{equation*}
    \tau := \pi(0)\pi(1)\cdots\pi(n-1)\big(\pi(n)\cdots\pi(n+l)\big)^{\omega}
\end{equation*}
which follows a prefix-suffix structure introduced in Definition~\ref{def: pre-suf struct}.
Moreover, the AP sequence of interest $v$ and the maximum occurrence number $\mathrm{num}(v,\operatorname{trace}(\tau_{suf}))$ are defined in Definition \ref{def: occ num}.
Then, the long-run visit proportion with respect to $v \in (2^{AP})^{*}$ is given by 
\begin{equation} \label{eq: visit proportion}
     P_v(\tau) := \frac{\mathrm{num}(v,\operatorname{trace}(\tau_{\mathrm{suf}})) \cdot |v |}{|\operatorname{trace}(\tau_{\mathrm{suf}})|}.
\end{equation}
\hfill \rule{1.5mm}{1.5mm}
\end{definition}
Intuitively, Definition \ref{def: long run visit} introduces a proportion measure associated with the AP sequence of interest $v$ as in Definition \ref{def: occ num}. 
The long-run visit proportion captures the fraction of positions in one period of the suffix trace that are covered by occurrences of $v$.

According to the aforementioned definitions, the problem of this paper is formulated as follows.
\begin{problem} \label{problem: 1}
    Consider a WTS $T$ introduced in Definition \ref{def: WTS}, a high-level LTL task $\varphi$ described in Section \ref{sec: LTL}, an AP sequence of interest $v$ defined in Definition \ref{def: occ num} and a desired long-run visit proportion $P_v^{\mathrm{des}} \in [0,1] $. The goal of this paper is to find an infinite path $\tau \in Path^{\omega}(T)$ in prefix-suffix structure with cost $J(\tau)$ as introduced in Definition \ref{def: pre-suf struct} such that
\begin{enumerate}[(i)]
        \item the trace of the infinite path $\tau$ satisfies the allocated LTL task $\varphi$, i.e. $\operatorname{trace}(\tau) \models \varphi$;
        \item the total cost of the infinite path is constraint by some constant $d \in \mathbb{R}_{>0}$, i.e., $J(\tau) \le d$;
        \item the long-run visit proportion with respect to $v$ remains within a prescribed tolerance $\delta \in \mathbb{R}_{>0}$ of the desired value $P_v^{\mathrm{des}}$; that is, the infinite path $\tau \in Path^{\omega}(T)$ satisfies  $| P_v(\tau) - P_v^{\mathrm{des}} | \le \delta $.
\hfill \rule{1.5mm}{1.5mm}
\end{enumerate}
\end{problem}

\begin{remark}
All three conditions in Problem \ref{problem: 1} are imposed as hard
constraints. Conditions (i) and (ii) enforce satisfaction of the LTL
specification and the prescribed cost bound, respectively, while
condition (iii) requires the long-run visit proportion to remain within
a prescribed tolerance of its desired value.
\hfill \rule{1.5mm}{1.5mm}
\end{remark}

\section{LTL Path Planning with Long-Run Visit Proportion Requirements}
\label{sec: solution}

In this section, we propose an approach to address the Problem \ref{problem: 1}. 

\subsection{Product System and Cost Automaton Construction}
\label{subsec: product cost automaton}
First, in order to find an infinite path in a prefix-suffix structure satisfying the allocated LTL task $\varphi$, we introduce the concept of the product system between WTS and NBA..
\begin{definition} \label{def: product system}
    Consider a WTS $T =(\Pi_{}, \Pi_{0}, U, \to_{}, w, AP, L ) $ as introduced in Definition \ref{def: WTS} and an NBA $\mathcal{A}_{\varphi} = (Q,\Sigma, \delta, Q_{0}, F )$ related to the allocated LTL task $\varphi$ as defined in Definition \ref{def: buchi}. The product system between $T$ and $\mathcal{A}_{\varphi}$ is defined as 
    \begin{equation*}
    T_{\otimes}  = (\Pi_{\otimes}, \Pi_{\otimes,0},\to_{\otimes}, w_{\otimes})
\end{equation*}
where
\begin{itemize}
    \item {$\Pi_{\otimes} = \Pi \times Q$ is the set of product states;}
    \item {$\Pi_{\otimes,0} = \Pi_{0} \times Q_{0}$ is the set of initial states;}
    \item {$\to_{\otimes}$ is the transition relation defined by: for any $\pi_{\otimes}=(\pi,q), \pi_{\otimes}^{\prime}=(\pi^{\prime},q^{\prime}) \in \Pi_{\otimes}$, we have $(\pi_{\otimes}, \pi_{\otimes}') \in  \to_{\otimes}$ if the following conditions hold:
    
    (1) $(\pi, u,\pi^{\prime}) \in \to$, for some $u \in U$;
    
    (2) $q' \in \delta(q,L(\pi))$;}
    \item {$w_{\otimes}: \Pi_{\otimes} \times \Pi_{\otimes} \to \mathbb{R}_{>0}$ is the cost function defined by: for any $\pi_{\otimes}=(\pi,q), \pi_{\otimes}^{\prime}=(\pi^{\prime},q') \in \Pi_{\otimes}$, we have $w_{\otimes}(\pi_{\otimes},\pi_{\otimes}') = w(\pi, \pi^{\prime})$.} 
    \hfill \rule{1.5mm}{1.5mm}
\end{itemize}
    
\end{definition}
Intuitively, the product system synchronizes the transitions of the WTS
$T$ with the state evolution of the NBA $\mathcal{A}_{\varphi}$.
However, the product system does not explicitly keep track of the accumulated cost
along a path. Thus, to incorporate the prescribed cost bound $d$ as required by (ii) in Problem \ref{problem: 1}, we introduce the concept of cost automaton.
\begin{definition}
\label{def: cost automaton}
Consider the product system
$T_{\otimes}
=
(\Pi_{\otimes},\Pi_{\otimes,0},
\to_{\otimes},w_{\otimes})$ derived from a WTS $T$ defined in Definition \ref{def: WTS} and an NBA $\mathcal{A}_{\varphi} $ related to LTL tasks as introduced in Definition \ref{def: buchi},
and a prescribed cost bound $d\in\mathbb{R}_{>0}$.
The cost automaton associated with $T_{\otimes}$ and $d$ is defined as
\begin{equation*}
   \mathcal{A}_{c}
=
(Q_c,Q_{c,0},\to_c), 
\end{equation*}
where
\begin{itemize}
   \item $Q_c\subseteq\Pi_{\otimes}\times[0,d]$ is the set of
reachable cost-augmented states, defined by
{\footnotesize
\begin{align*}
&Q_c :=\nonumber \\
&\left\{
(\pi_{\otimes}(k),c(k))
\;\middle|\;
\begin{array}{l}
k\in\mathbb{N},\
\exists \pi_{\otimes}(0),\ldots,\pi_{\otimes}(k-1), \\\text{ such that:}\
\pi_{\otimes}(0)\in\Pi_{\otimes,0},\\
(\pi_{\otimes}(i),\pi_{\otimes}(i+1))
\in{\to_{\otimes}},\\
i=0,\ldots,k-1,\\
c(k)=\displaystyle\sum_{i=0}^{k-1}
w_{\otimes}(\pi_{\otimes}(i),\pi_{\otimes}(i+1))
\le d
\end{array}
\right\}.
\end{align*}}
Here, $\pi_{\otimes}(k)$ is the terminal product state
of a finite path starting from $\Pi_{\otimes,0}$,
and $c(k)$ is the cumulative transition cost along that path.
For $k=0$, the empty sum is defined as zero.

    \item
   $ Q_{c,0}
    :=
    \left\{
    (\pi_{\otimes,0},0)
    \mid
    \pi_{\otimes,0}\in\Pi_{\otimes,0}
    \right\}$
    is the set of initial cost-augmented states.

    \item $\to_c\subseteq Q_c\times Q_c$ is the transition relation.
    For any $q_{c}:= (\pi_{\otimes},c), q_{c}^{\prime}:=(\pi_{\otimes}',c')$, we have $(q_{c}, q_{c}^{\prime}) \in \to_c$
    if and only if (i) $(\pi_{\otimes}, \pi_{\otimes}') \in \to_{\otimes}$, (ii)
    $c'=c+w_{\otimes}(\pi_{\otimes},\pi_{\otimes}')$, and (iii) $  c'\leq d$.
\hfill \rule{1.5mm}{1.5mm}
\end{itemize}
\end{definition}
Intuitively, the cost automaton augments each state of the product system with the cost accumulated along the corresponding path. 
By retaining only cost-augmented states and transitions whose accumulated cost does not exceed $d$, it incorporates the prescribed cost bound directly into the state-space construction. 
This enables the subsequent
planning procedure to search for paths that satisfy the LTL
specification while simultaneously enforcing condition~(ii) of
Problem~\ref{problem: 1}.
The cost automaton $\mathcal{A}_c$ can be constructed by exploring the
reachable states of the product system while keeping track of the
accumulated transition cost. 
As summarized in Algorithm \ref{alg: cost automaton}, the construction starts from every initial state of the product system with zero accumulated cost. 
Whenever a cost-augmented state $(\pi_{\otimes},c)$ is expanded, each outgoing  transition $(\pi_{\otimes},\pi_{\otimes}') \in \to_{\otimes}$ following the transition rules introduced in Definition \ref{def: cost automaton} generates a successor state
$(\pi_{\otimes}',c')$.
The successor state and the corresponding transition are retained only
if $c'\leq d$. The algorithm additionally maintains a predecessor map
$\mathrm{prev}$, which records one predecessor for each newly discovered
state and is used to reconstruct a path from an initial state.
In Algorithm~\ref{alg: cost automaton},
$\textsc{Dequeue}$ removes and returns the first element
of the queue, while $\textsc{Enqueue}$ appends an element
to the end of the queue.
\begin{algorithm}[t]
\small
 \caption{BuildCostAutomaton}
 \begin{algorithmic}[1] \label{alg: cost automaton}
 \renewcommand{\algorithmicrequire}{\textbf{Input:}}
 \renewcommand{\algorithmicensure}{\textbf{Output:}}
    \REQUIRE Product system $T_{\otimes}=(\Pi_{\otimes},\Pi_{\otimes,0},\to_{\otimes},w_{\otimes})$, cost bound $d \in \mathbb{R}_{>0}$.
    \ENSURE Cost automaton $\mathcal{A}_c = (Q_c, Q_{c,0}, \to_c, \mathrm{prev})$.
    \STATE $Q_{c,0} \gets \{(\pi_{\otimes,0},0) \mid \pi_{\otimes,0} \in \Pi_{\otimes,0}\}$
    \STATE $Q_{c} \gets Q_{c,0}$; \quad $\to_c \gets \emptyset$; \quad $\mathrm{prev} \gets \emptyset$
    \STATE $\mathit{Queue} \gets Q_{c,0}$
    \WHILE{$\mathit{Queue} \neq \emptyset$}
        \STATE $(\pi_{\otimes}, c) \gets \textsc{Dequeue}(\mathit{Queue})$
        \FOR{each $\pi_{\otimes}' \in \Pi_{\otimes}$ such that $(\pi_{\otimes},\pi_{\otimes}')\in \to_{\otimes} $}
            \STATE $c' \gets c + w_{\otimes}(\pi_{\otimes}, \pi_{\otimes}')$
            \IF{$c' \leq d$}
                \IF{$(\pi_{\otimes}', c') \notin Q_c$}
                    \STATE $Q_c \gets Q_c \cup \{(\pi_{\otimes}', c')\}$
                    \STATE $\textsc{Enqueue}(\mathit{Queue}, (\pi_{\otimes}', c'))$
                    \STATE $\mathrm{prev}[(\pi_{\otimes}', c')] \gets (\pi_{\otimes}, c)$
                \ENDIF
                \STATE $\to_c\; \gets\; \to_c \cup \big\{\big((\pi_{\otimes},c), (\pi_{\otimes}',c')\big)\big\}$
            \ENDIF
        \ENDFOR
    \ENDWHILE
    \RETURN $\mathcal{A}_c = (Q_c, Q_{c,0}, \to_c, \mathrm{prev})$
 \end{algorithmic}
\end{algorithm}
\vspace{-2pt}


\begin{algorithm}[t]
\small
 \caption{ReconstructPrefix}
 \begin{algorithmic}[1] \label{alg: reconstruct}
 \renewcommand{\algorithmicrequire}{\textbf{Input:}}
 \renewcommand{\algorithmicensure}{\textbf{Output:}}
    \REQUIRE Cost automaton $\mathcal{A}_c = (Q_c, Q_{c,0}, \to_c, \mathrm{prev})$, a target state $q_c = (\pi_{\otimes}, c) \in Q_c$.
    \ENSURE Prefix path $\tilde{\tau}_{\mathrm{pre}}$ in $T$ from some initial WTS state to $\Pi_g(\pi_{\otimes})$.
    \STATE \textbf{Reconstruct the $\mathcal{A}_c$-path backwards via $\mathrm{prev}$}
    \STATE Initialize $\tilde{\tau}_{c,\mathrm{pre}}$ as an empty sequence.
    \STATE $q_{\mathrm{curr}} \gets q_{c}$
    \WHILE{$q_{\mathrm{curr}} \notin Q_{c,0}$}
        \STATE Prepend $q_{\mathrm{curr}}$ to $\tilde{\tau}_{c,\mathrm{pre}}$.
        \STATE $q_{\mathrm{curr}} \gets \mathrm{prev}[q_{\mathrm{curr}}]$
    \ENDWHILE
    \STATE Prepend $q_{\mathrm{curr}}$ to $\tilde{\tau}_{c,\mathrm{pre}}$.\\  
    \STATE \textbf{Project the $\mathcal{A}_c$-path onto the WTS:} \\
    $\tilde{\tau}_{\mathrm{pre}} \gets \operatorname{Proj}_{T}(\tilde{\tau}_{c,\mathrm{pre}})$
    \RETURN $\tilde{\tau}_{\mathrm{pre}}$
 \end{algorithmic}
\end{algorithm}
\vspace{-5pt}








\subsection{Prefix--suffix path synthesis}
\label{subsec: prefix suffix synthesis}

We next use the cost automaton to construct a feasible infinite path in
prefix-suffix structure as introduced in Definition \ref{def: pre-suf struct}. 
To find a valid suffix structure for satisfying LTL tasks $\varphi$, we define the following sets of states.
\begin{definition} \label{def: state set}
    Consider a product system $T_{\otimes} = (\Pi_{\otimes}, \Pi_{\otimes,0},\to_{\otimes}, w_{\otimes})$ as introduced in Definition \ref{def: product system} and its corresponding cost automaton $\mathcal{A}_{c}
= (Q_c,Q_{c,0},\to_c)$ as stated in Definition \ref{def: cost automaton}. We define the set of accepting states of the product system
\begin{equation}
    \Pi_{\otimes, F} := \{(\pi,q) \in \Pi_{\otimes} \mid q \in F \},      
\end{equation}
and the set of accepting states of the cost automaton
\begin{equation}
Q_{c,F}:=
\{
(\pi_{\otimes},c)\in Q_c
\mid
\pi_{\otimes}\in\Pi_{\otimes,F}
\}.     
\end{equation}
Furthermore, for a finite path
$\tau_c=q_c(0)q_c(1)\cdots q_c(m)$
of $\mathcal{A}_c$, let
$\operatorname{States}(\tau_c)
:=
\{q_c(k)\mid k=0,\ldots,m\}$
denote the set of cost-automaton states visited by $\tau_c$.
\hfill \rule{1.5mm}{1.5mm}
\end{definition}
Intuitively, $\Pi_{\otimes,F}$ collects all states of the product
system $T_{\otimes}$ whose NBA components are accepting states, whereas
$Q_{c,F}$ lifts these accepting product states to the cost automaton by
associating them with their reachable accumulated costs.

To establish the correspondence among paths in the cost automaton, the
product system, and the WTS, we introduce projection mappings that
extract the underlying product-system and WTS components from a
cost-automaton path. These mappings allow a path synthesized in
$\mathcal{A}_c$ to be converted into an executable path in $T$.
\begin{definition}
For a cost-automaton state
$q_c=(\pi_{\otimes},c)=((\pi,q),c)\in Q_c$, define
$\operatorname{proj}_{\otimes}:Q_c\to\Pi_{\otimes}$ and
$\operatorname{proj}_{T}:Q_c\to\Pi$ by
$\operatorname{proj}_{\otimes}(q_c):=\pi_{\otimes}$ and
$\operatorname{proj}_{T}(q_c):=\pi$, respectively.

For a finite path
$\tau_c=q_c(0)q_c(1)\cdots q_c(m)$
in a cost automaton $\mathcal{A}_c$, its projection onto the WTS is defined
componentwise as
\begin{equation}
\operatorname{Proj}_{T}(\tau_c)
:=
\operatorname{proj}_{T}(q_c(0))
\operatorname{proj}_{T}(q_c(1))
\cdots
\operatorname{proj}_{T}(q_c(m)).
\label{eq: WTS projection}
\end{equation}
Similarly, its projection onto the product system is
\begin{equation}
\operatorname{Proj}_{\otimes}(\tau_c)
:=
\operatorname{proj}_{\otimes}(q_c(0))
\operatorname{proj}_{\otimes}(q_c(1))
\cdots
\operatorname{proj}_{\otimes}(q_c(m)).
\end{equation}

For a product-system state
$\pi_{\otimes}=(\pi,q)\in\Pi_{\otimes}$, define the projection
$\operatorname{proj}_{T}^{\otimes}:\Pi_{\otimes}\to\Pi$ by
$\operatorname{proj}_{T}^{\otimes}(\pi_{\otimes}):=\pi$.
This projection is extended componentwise to any finite or infinite
product-system state sequence. In particular, for
$\tau_{\otimes}
=\pi_{\otimes}(0)\pi_{\otimes}(1)\cdots$, define
$\operatorname{Proj}_{T}^{\otimes}(\tau_{\otimes})
:=
\operatorname{proj}_{T}^{\otimes}(\pi_{\otimes}(0))
\operatorname{proj}_{T}^{\otimes}(\pi_{\otimes}(1))
\cdots$.
\hfill \rule{1.5mm}{1.5mm}
\end{definition}

Intuitively, $\operatorname{proj}_{\otimes}$ removes the accumulated-cost
component of a cost-automaton state and retains its underlying product
state, whereas $\operatorname{proj}_{T}$ removes both the accumulated-cost
and NBA components and retains only the corresponding WTS state.
Meanwhile, $\operatorname{proj}_{T}^{\otimes}$ extracts the WTS
component directly from a product system state.

The componentwise extensions of these mappings preserve the ordering
of states along a path. Moreover, by the definitions of the transition
relations of $\mathcal{A}_c$ and the product system, they map every
cost-automaton path to valid paths in the product system and the WTS.
Consequently, for every cost-automaton path $\tau_c$,
$\operatorname{Proj}_{T}(\tau_c)
=\operatorname{Proj}_{T}^{\otimes}
(\operatorname{Proj}_{\otimes}(\tau_c))$.

Then, the following lemma establishes the correspondence between a finite
path in the cost automaton and a suffix cycle in the WTS.


\begin{lemma}
\label{lem: suffix cycle}
Consider a finite path segment $\tilde{\tau}_c:=q_c(n)q_c(n+1)\cdots q_c(m)$ of a cost automaton $\mathcal{A}_c$ introduced in Definition \ref{def: cost automaton} satisfying $q_c(n)=(\pi_{\otimes},c_n)$, $q_c(m)=(\pi_{\otimes},c_m)$ with $c_n < c_m$.
Define $\tilde{\tau}_{c,suf}:=q_c(n)q_c(1)\cdots q_c(m-1)$ as the finite path which removes the last state $q_c(m)$ from $\tilde{\tau}_c$.
Then $\operatorname{Proj}_{T}(\tilde{\tau}_{c,\mathrm{suf}})$ is a suffix
cycle of WTS in the sense of Definition~\ref{def: pre-suf struct}.
\hfill \rule{1.5mm}{1.5mm}
\end{lemma}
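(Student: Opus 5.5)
The argument simply unpacks the transition relations layer by layer. I read $\tilde{\tau}_{c,\mathrm{suf}}$ as $q_c(n)q_c(n+1)\cdots q_c(m-1)$; the index ``$q_c(1)$'' in the statement is a typo. Write $q_c(k)=(\pi_\otimes(k),c(k))$ with $\pi_\otimes(k)=(\pi(k),q(k))$. Since $\tilde{\tau}_c$ is a path of $\mathcal{A}_c$, each consecutive pair satisfies $(q_c(k),q_c(k+1))\in\to_c$ for $k=n,\ldots,m-1$. By condition (i) of Definition~\ref{def: cost automaton}, this gives $(\pi_\otimes(k),\pi_\otimes(k+1))\in\to_\otimes$. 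By condition (1) of Definition~\ref{def: product system}, there is then some $u(k)\in U$ with $(\pi(k),u(k),\pi(k+1))\in\to$. Hence $\operatorname{Proj}_T(\tilde{\tau}_c)=\pi(n)\cdots\pi(m)$ is a finite path of $T$.

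Next I use the hypothesis that $q_c(n)$ and $q_c(m)$ share the product state $\pi_\otimes$. Then $\pi(m)=\operatorname{proj}_T^{\otimes}(\pi_\otimes)=\pi(n)$. Dropping the last state, $\operatorname{Proj}_T(\tilde{\tau}_{c,\mathrm{suf}})=\pi(n)\cdots\pi(m-1)$ is still a path of $T$. Its last state transitions back to its first, since $(\pi(m-1),u(m-1),\pi(n))\in\to$. This is exactly the closing requirement in Definition~\ref{def: pre-suf struct}, with $l:=m-1-n$. Repeated states are explicitly allowed there, so no further condition needs checking.

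The only delicate point is nondegeneracy: we need $m>n$, so that the suffix is nonempty and $l\in\mathbb{N}$. This is where $c_n<c_m$ is used. Costs along a path of $\mathcal{A}_c$ are nondecreasing, since each step adds $w_\otimes>0$ by condition (ii) of Definition~\ref{def: cost automaton}. So $c_n<c_m$ forces $m\neq n$, and because the segment is indexed forward, $m>n$; indeed $m\ge n+1$. When $m=n+1$ the cycle is the single state $\pi(n)$ with a self-loop, which is still admissible.

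I would also note that the NBA components play no role here, because the lemma only claims a WTS cycle. Consistency with $\operatorname{Proj}_T=\operatorname{Proj}_T^\otimes\circ\operatorname{Proj}_\otimes$ follows from the remark preceding the lemma.
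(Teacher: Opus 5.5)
Your proposal is correct and follows the paper's proof: you lift each $\to_c$ step to $\to_\otimes$ and then to $\to$, and use the shared product state to get $\pi(m)=\pi(n)$, which closes the cycle. Your explicit use of $c_n<c_m$ to force $m\ge n+1$, so that the suffix is nonempty, is a detail the paper's proof leaves implicit.
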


\begin{proof}
Let
$q_c(k)=(\pi_{\otimes}(k),c_k)$, $\pi_{\otimes}(k)=(\pi(k),q(k))$.
For every $k=n,\ldots,m-1$, the cost-automaton transition $(q_c(k),q_c(k+1))\in\to_c$
implies, by Definition~\ref{def: cost automaton}, that $(\pi_{\otimes}(k),\pi_{\otimes}(k+1))\in\to_{\otimes}$.
By Definition~\ref{def: product system}, this transition $\to_{\otimes}$ further
implies that there exists an input $u(k)\in U$ such that $(\pi(k),u(k),\pi(k+1))\in\to$.
Therefore, the projection of $\tilde{\tau}_c$ onto the WTS is a valid finite path.

Since $q_c(n)$ and $q_c(m)$ have the same component of the state of the product system, i.e., $\operatorname{proj}_{\otimes}(q_c(n)) = \operatorname{proj}_{\otimes}(q_c(m)) $,
which implies that $\pi(m)=\pi(n)$. Therefore, the transition $(\pi(m-1),u(m-1), \pi(m)) \in  \to$ induced by $\tilde{\tau}_c$ indicates that $(\pi(m-1),u(m-1),\pi(n)) \in  \to$.
It follows that
\[
\operatorname{Proj}_{T}(\tilde{\tau}_{c,\mathrm{suf}})
=
\pi(n)\pi(n+1)\cdots\pi(m-1)
\]
defines a suffix cycle according to Definition
\ref{def: pre-suf struct}.
\end{proof}

Intuitively, although $\tilde{\tau}_c$ does not form a cycle in $\mathcal{A}_c$ because its initial and terminal states have different accumulated costs, these two states correspond to the same state in the product system. 
Consequently, the cost-automaton path between them projects onto a cycle in the WTS. Lemma~\ref{lem: suffix cycle} therefore allows candidate WTS suffix cycles to be identified by searching for finite paths between cost-automaton states that have the same product-state component but different accumulated costs.

However, not every suffix cycle obtained in this manner satisfies the LTL specification $\varphi$. We next define the concept of candidate suffix cycles of WTS, which is the suffix cycle which could satisfy the LTL task by prefix-suffix structure.
\begin{definition} \label{def: candidate cycle}
    Consider a finite cost-automaton path segment $\tilde{\tau}_{c,\mathrm{suf}}$ whose WTS projection is a suffix cycle according to Lemma \ref{lem: suffix cycle}.
    The projected suffix cycle in WTS $\tilde{\tau}_{\mathrm{suf}}:=\operatorname{Proj}_{T}(\tilde{\tau}_{c,\mathrm{suf}})$ is called a \emph{candidate accepting suffix cycle} of WTS if $\operatorname{States}(\tilde{\tau}_{c,\mathrm{suf}})
\cap Q_{c,F}\neq\emptyset.$
\hfill \rule{1.5mm}{1.5mm}
\end{definition}
Intuitively, a candidate accepting suffix cycle, as defined in
Definition \ref{def: candidate cycle}, requires its corresponding path
in the cost automaton to visit at least one accepting state introduced
in Definition \ref{def: state set}. To construct a complete accepting
prefix--suffix path, it remains to recover a finite prefix from an
initial state to the initial product state of the candidate suffix
cycle. Such a prefix can be reconstructed by tracing the predecessor
map $\mathrm{prev}$ backward in the cost automaton and then projecting
the resulting path onto the WTS, as summarized in Algorithm
\ref{alg: reconstruct}. The following lemma establishes that the
resulting prefix--suffix path satisfies the LTL specification $\varphi$.

\begin{lemma}
\label{lem: accepting prefix suffix}
Let $\tilde{\tau}_{\mathrm{suf}}$ be a candidate accepting suffix cycle as introduced in Definition \ref{def: candidate cycle}, and
let $\tilde{\tau}_{\mathrm{pre}}$ be the output of Algorithm \ref{alg: reconstruct}, given by $\tilde{\tau}_{\mathrm{pre}} =\text{ReconstructPrefix}(\mathcal{A}_c,q_c(n))$, where $q_c(n)$ is the initial state of the finite path
$\tilde{\tau}_{c,\mathrm{suf}}$.
Then the trace of the infinite path
\begin{equation*}
    \tau
=
\tilde{\tau}_{\mathrm{pre}}
\odot
\big(\tilde{\tau}_{\mathrm{suf}}\big)^{\omega}
\end{equation*}
admits an accepting run in $\mathcal{A}_{\varphi}$, where $\odot$
denotes path concatenation with the common boundary state counted only
once, as introduced in Definition \ref{def: pre-suf struct}.
Consequently,
$\operatorname{trace}(\tau)\in
\mathcal{L}^{\omega}(\mathcal{A}_{\varphi})$ and hence
$\operatorname{trace}(\tau)\models\varphi$.
\hfill \rule{1.5mm}{1.5mm}
\end{lemma}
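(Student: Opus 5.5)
I will build an explicit run of $\mathcal{A}_{\varphi}$ over $\operatorname{trace}(\tau)$. It is obtained by concatenating the NBA components of the cost-automaton paths that generate the prefix and the suffix. I then show this run visits $F$ infinitely often, so Definition~\ref{def: buchi acceptance} and Lemma~\ref{lem: ltl nba} give $\operatorname{trace}(\tau)\models\varphi$.

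\textbf{Prefix.} Let $\tilde{\tau}_{c,\mathrm{pre}}=q_c(0)\cdots q_c(n)$ be the cost-automaton path built inside Algorithm~\ref{alg: reconstruct}. Each recorded pair $(\mathrm{prev}[q],q)$ was added to $\to_c$ in Algorithm~\ref{alg: cost automaton}, so consecutive states are linked by $\to_c$. The loop terminates at some $q_c(0)\in Q_{c,0}$. Termination holds because every cost is positive, so the accumulated cost strictly decreases along $\mathrm{prev}$ and takes finitely many values.

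Write $q_c(k)=((\pi(k),q(k)),c_k)$. Then $q(0)\in Q_0$ and $\pi(0)\in\Pi_0$. By Definitions~\ref{def: cost automaton} and~\ref{def: product system}, for every $k$ we have $(\pi(k),u,\pi(k+1))\in\to$ for some $u$, and $q(k+1)\in\delta(q(k),L(\pi(k)))$.

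\textbf{Suffix.} Take the path $\tilde{\tau}_c=q_c(n)\cdots q_c(m)$ of Lemma~\ref{lem: suffix cycle}. It starts at the same state $q_c(n)$ as the prefix, and $\operatorname{proj}_{\otimes}(q_c(m))=\operatorname{proj}_{\otimes}(q_c(n))$. So $(\pi(m),q(m))=(\pi(n),q(n))$.

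The same reasoning gives $q(k+1)\in\delta(q(k),L(\pi(k)))$ for $n\le k\le m-1$. In particular, the last step reads $q(n)=q(m)\in\delta(q(m-1),L(\pi(m-1)))$, which closes the NBA loop.

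\textbf{The run.} Set $l=m-n-1$. Define $R$ by taking $q(0)\cdots q(n-1)$ followed by $(q(n)\cdots q(m-1))^{\omega}$. The word is $\operatorname{trace}(\tau)=L(\pi(0))\cdots L(\pi(n-1))\bigl(L(\pi(n))\cdots L(\pi(m-1))\bigr)^{\omega}$. This matches the $\odot$ convention, since $\pi(n)$ is counted once.

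Checking the transition condition position by position:
- Inside the prefix, it holds as shown above.
- At the junction, position $n-1$ to $n$, it also comes from the prefix.
- Inside each suffix period, it holds by the suffix facts.
- At each wrap-around, from position $m-1$ back to $n$, it holds by the closing step $q(n)\in\delta(q(m-1),L(\pi(m-1)))$.

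Hence $R$ is a run of $\mathcal{A}_{\varphi}$ over $\operatorname{trace}(\tau)$ starting in $Q_0$. The same argument shows that $\tau$ is a genuine path of $T$ with the required prefix--suffix structure.

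\textbf{Acceptance.} By Definition~\ref{def: candidate cycle}, some $q_c(j)$ with $n\le j\le m-1$ lies in $Q_{c,F}$, so $q(j)\in F$. Because the block $q(n)\cdots q(m-1)$ repeats forever, $q(j)\in\operatorname{Inf}(R)\cap F$. So $R$ is accepting, $\operatorname{trace}(\tau)\in\mathcal{L}^{\omega}(\mathcal{A}_{\varphi})$, and Lemma~\ref{lem: ltl nba} yields $\operatorname{trace}(\tau)\models\varphi$.

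\textbf{Main obstacle.} The delicate point is the index bookkeeping. The NBA transition into $q(k+1)$ reads the label $L(\pi(k))$ of the \emph{source} WTS state, and the suffix drops $q_c(m)$. I must check that the run and the word stay aligned at both the prefix–suffix junction and the cyclic wrap-around. The key is that equality of the product-state components of $q_c(n)$ and $q_c(m)$ supplies both the WTS closing transition and the NBA closing transition. I must also note that the accepting index $j$ lies in $\{n,\dots,m-1\}$, so it is not lost when $q_c(m)$ is removed; this is immediate from Definition~\ref{def: candidate cycle}, which is stated on $\tilde{\tau}_{c,\mathrm{suf}}$.
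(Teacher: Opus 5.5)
Your proposal is correct and follows essentially the same route as the paper's proof. Both arguments lift the reconstructed prefix and the suffix segment to the product system, read off the NBA components to obtain a run $R$ over $\operatorname{trace}(\tau)$ starting in $Q_0$, and conclude acceptance because an accepting NBA state lies in the infinitely repeated suffix block. Yours additionally spells out what the paper compresses into ``$\tilde{\tau}_{\otimes,\mathrm{suf}}$ is a valid product-system cycle by construction'': termination of the $\mathrm{prev}$-tracing, and the transition checks at the prefix--suffix junction and at the wrap-around via $\operatorname{proj}_{\otimes}(q_c(m))=\operatorname{proj}_{\otimes}(q_c(n))$.
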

\begin{proof}
By Algorithm \ref{alg: reconstruct}, the prefix
$\tilde{\tau}_{\mathrm{pre}}$ is obtained by tracing the predecessor map
$\mathrm{prev}$ backward from $q_c(n)$ to some initial state in
$Q_{c,0}$ and then projecting the reconstructed cost-automaton path
onto the WTS. Denote this reconstructed cost-automaton path by
$\tilde{\tau}_{c,\mathrm{pre}}$, and define
$\tilde{\tau}_{\otimes,\mathrm{pre}}
:=\operatorname{Proj}_{\otimes}(\tilde{\tau}_{c,\mathrm{pre}})$.
Since every transition in the cost automaton induces a transition in
the product system, $\tilde{\tau}_{\otimes,\mathrm{pre}}$ is a valid
product-system path from some initial state
$\pi_{\otimes}(0)\in\Pi_{\otimes,0}$ to
$\pi_{\otimes}(n):=
\operatorname{proj}_{\otimes}(q_c(n))$. Moreover,
$\operatorname{Proj}_{T}^{\otimes}
(\tilde{\tau}_{\otimes,\mathrm{pre}})
=\tilde{\tau}_{\mathrm{pre}}$.

By Definition \ref{def: candidate cycle} and Lemma
\ref{lem: suffix cycle},
$\tilde{\tau}_{\mathrm{suf}}
=\operatorname{Proj}_{T}(\tilde{\tau}_{c,\mathrm{suf}})$
is a valid suffix cycle in the WTS. Define
$\tilde{\tau}_{\otimes,\mathrm{suf}}
:=\operatorname{Proj}_{\otimes}
(\tilde{\tau}_{c,\mathrm{suf}})$.
By the construction of the cost automaton,
$\tilde{\tau}_{\otimes,\mathrm{suf}}$ is a valid product-system cycle
starting at $\pi_{\otimes}(n)$, and
$\operatorname{Proj}_{T}^{\otimes}
(\tilde{\tau}_{\otimes,\mathrm{suf}})
=\tilde{\tau}_{\mathrm{suf}}$.
Therefore,
$R_{\otimes}
:=\tilde{\tau}_{\otimes,\mathrm{pre}}
\odot(\tilde{\tau}_{\otimes,\mathrm{suf}})^\omega$
is a valid infinite product-state sequence whose WTS projection is
$\tau=\tilde{\tau}_{\mathrm{pre}}\odot
(\tilde{\tau}_{\mathrm{suf}})^\omega$.

Write
$R_{\otimes}=\pi_{\otimes}(0)\pi_{\otimes}(1)\cdots$, where
$\pi_{\otimes}(k)=(\pi(k),q(k))$. Since
$\pi_{\otimes}(0)\in\Pi_{\otimes,0}$, we have $q(0)\in Q_0$.
Moreover, by the transition relation of the product system,
$q(k+1)\in\delta(q(k),L(\pi(k)))$ for every $k\in\mathbb{N}$.
Hence, $R=q(0)q(1)q(2)\cdots$ is a run of
$\mathcal{A}_{\varphi}$ over
$\operatorname{trace}(\tau)=L(\pi(0))L(\pi(1))\cdots$.

Since $\tilde{\tau}_{\mathrm{suf}}$ is a candidate accepting suffix
cycle, Definition \ref{def: candidate cycle} gives
$\operatorname{States}(\tilde{\tau}_{c,\mathrm{suf}})
\cap Q_{c,F}\neq\emptyset$.
Thus, there exists a state
$q_c^F\in
\operatorname{States}(\tilde{\tau}_{c,\mathrm{suf}})
\cap Q_{c,F}$.
By the definition of $Q_{c,F}$, its product-state component can be
written as
$\operatorname{proj}_{\otimes}(q_c^F)=(\pi^F,q^F)$,
where $q^F\in F$.
Since
$q_c^F\in\operatorname{States}
(\tilde{\tau}_{c,\mathrm{suf}})$,
its product-state component
$\operatorname{proj}_{\otimes}(q_c^F)=(\pi^F,q^F)$
occurs in the product-system suffix cycle
$\tilde{\tau}_{\otimes,\mathrm{suf}}
=\operatorname{Proj}_{\otimes}
(\tilde{\tau}_{c,\mathrm{suf}})$.
Because this product-system suffix cycle is repeated infinitely often
in $R_{\otimes}$, the product state $(\pi^F,q^F)$ occurs infinitely
often in $R_{\otimes}$. Consequently, its NBA component $q^F$ occurs
infinitely often in $R$.
Therefore, $q^F\in\operatorname{Inf}(R)\cap F$, and hence $R$ is an accepting run of $\mathcal{A}_{\varphi}$ over $\operatorname{trace}(\tau)$. It follows
that
$\operatorname{trace}(\tau)\in
\mathcal{L}^{\omega}(\mathcal{A}_{\varphi})$ and, consequently,
$\operatorname{trace}(\tau)\models\varphi$.
\end{proof}

Intuitively, Lemma \ref{lem: accepting prefix suffix} provides a
constructive procedure for obtaining a prefix--suffix path satisfying
the LTL specification. Specifically, one first searches the cost
automaton for a finite path
$\tilde{\tau}_{c,\mathrm{suf}}$ between two states of the cost automaton that
share the same component of product system state. By Lemma
\ref{lem: suffix cycle}, its projection onto the WTS forms a suffix
cycle. If the corresponding cost-automaton path $\tilde{\tau}_{c,\mathrm{suf}}$ contains a state in
$Q_{c,F}$, the resulting suffix cycle is a candidate accepting suffix
cycle according to Definition \ref{def: candidate cycle}. A compatible
prefix can then be recovered using Algorithm \ref{alg: reconstruct},
and Lemma \ref{lem: accepting prefix suffix} guarantees that the
resulting infinite prefix--suffix path satisfies $\varphi$.

However, different candidate accepting suffix cycles may induce
different cyclic traces and, consequently, different long-run visit
proportions with respect to the AP sequence $v$. It is therefore
necessary to enumerate the candidate suffix cycles represented in the
cost automaton and select the one whose long-run visit proportion is
closest to the desired value $P_v^{\mathrm{des}}$. The complete planning procedure
is summarized in Algorithm \ref{alg: main}.

In Algorithm \ref{alg: main}, $\textsc{LTL2NBA}(\varphi)$ in line 2 denotes any toolbox that can convert the LTL $\varphi$ into its associated NBA as defined in Definition \ref{def: buchi}.
The function $\textsc{Product}(T,\mathcal{A}_{\varphi})$ in line 3
constructs the product system $T_{\otimes}$ according to
Definition~\ref{def: product system}.
$\operatorname{Reach}(\Pi_{\otimes,0})$ in line 4 denotes the set of all product-system states reachable from at least one initial state in $\Pi_{\otimes,0}$. 
This set can be computed using a standard graph
reachability search, such as breadth-first search or depth-first search.
For any two states $q_{c,1},q_{c,2}\in Q_c$,
$\textsc{AllPaths}_{\mathcal{A}_c}(q_{c,1},q_{c,2})$ denotes the set of all
finite paths in $\mathcal{A}_c$ from $q_{c,1}$ to $q_{c,2}$. Since the
accumulated-cost component strictly increases along every transition
and is bounded by $d$, the cost automaton is finite and acyclic.
Therefore, $\textsc{AllPaths}_{\mathcal{A}_c}(q_{c,1},q_{c,2})$ is finite
and can be enumerated using a standard depth-first search.

\begin{algorithm}[t]
\small
 \caption{LTL path planning for a desired long-run visit proportion}
 \begin{algorithmic}[1] \label{alg: main}
 \renewcommand{\algorithmicrequire}{\textbf{Input:}}
 \renewcommand{\algorithmicensure}{\textbf{Output:}}
    \REQUIRE LTL formula $\varphi$, WTS $T$, AP sequence of interest $v$, desired proportion $P_v^{\mathrm{des}} \in [0,1]$, tolerance
$\delta\in\mathbb{R}_{>0}$, cost bound $d \in \mathbb{R}_{>0}$.
    \ENSURE Optimal infinite path $\tau$ in prefix--suffix form, or ``no feasible plan''.

    \STATE \textbf{// Step 1: Build product system}
    \STATE $\mathcal{A}_{\varphi} \gets \textsc{LTL2NBA}(\varphi)$
    \STATE $T_{\otimes} \gets \textsc{Product}(T, \mathcal{A}_{\varphi})$
    \IF{$\mathrm{Reach}(\Pi_{\otimes,0}) \cap \Pi_{\otimes,F} = \emptyset$}
        \RETURN ``no feasible plan''
    \ENDIF

    \STATE \textbf{// Step 2: Construct cost automaton}
    \STATE $\mathcal{A}_c = (Q_c,Q_{c,0},\to_c,\mathrm{prev}) \gets \text{BuildCostAutomaton}(T_{\otimes}, d)$
    \STATE $Q_{c,F} \gets \{(\pi_{\otimes},c) \in Q_c \mid \pi_{\otimes} \in \Pi_{\otimes,F}\}$

    \STATE \textbf{// Step 3: Enumerate closed suffix cycles visiting accepting states}
    \STATE $\mathit{best\_diff} \gets +\infty$; \quad $\mathit{best\_plan} \gets \bot$

    \FOR{each $\pi_{\otimes} \in \{\pi \mid \exists c:(\pi_{\otimes},c)\in Q_c\}$}
        \STATE $C(\pi_{\otimes}) \gets \{c \mid (\pi_{\otimes},c)\in Q_c\}$

        \FOR{each $(c_1,c_2) \in C(\pi_{\otimes}) \times C(\pi_{\otimes})$ with $c_1 < c_2$}
            \STATE $q_{c,1} \gets (\pi_{\otimes},c_1)$; \quad $q_{c,2} \gets (\pi_{\otimes},c_2)$

            \FOR{each $\tilde{\tau}_c \in \textsc{AllPaths}_{\mathcal{A}_c}(q_{c,1},q_{c,2})$}

                 \STATE $\tilde{\tau}_{c,suf} \gets $ Remove the last state $q_{c,2}$ from $\tilde{\tau}_{c}$
                
                \IF{$\operatorname{States}(\tilde{\tau}_{c,suf}) \cap Q_{c,F} = \emptyset$}
                    \STATE \textbf{continue}
                \ENDIF

                \STATE $\tilde{\tau}_{\mathrm{suf}} \gets \operatorname{Proj}_{T}(\tilde{\tau}_{c,suf})$

                \STATE $\tilde{\tau}_{\mathrm{pre}} \gets \text{ReconstructPrefix}(\mathcal{A}_c,q_{c,1})$

                \STATE $P_v \gets$ Compute the long-run visit proportion in \eqref{eq: visit proportion} with respect to $v$ by $ P_v = \frac{\mathrm{num}(v,\operatorname{trace}(\tilde{\tau}_{\mathrm{suf}})) \cdot |v |}{|\operatorname{trace}(\tilde{\tau}_{\mathrm{suf}})|}$

                \IF{$|P_v-P_v^{\mathrm{des}}| < \mathit{best\_diff}$}
                    \STATE $\mathit{best\_diff} \gets |P_v-P_v^{\mathrm{des}}|$
                    \STATE $\mathit{best\_plan} \gets (\tilde{\tau}_{\mathrm{pre}},\tilde{\tau}_{\mathrm{suf}})$
                \ENDIF

            \ENDFOR
        \ENDFOR
    \ENDFOR
   \STATE \textbf{// Step 4: Verify the visit-proportion tolerance}
    \IF{$\mathit{best\_diff}\le\delta$}
    \STATE
    $(\tau_{\mathrm{pre}}^{*},\tau_{\mathrm{suf}}^{*})
    \gets\mathit{best\_plan}$
    \STATE
    $\tau^{*}
    \gets
    \tau_{\mathrm{pre}}^{*}
    \odot
    (\tau_{\mathrm{suf}}^{*})^{\omega}$
    \RETURN $\tau^{*}$
    \ELSE
    \RETURN ``no feasible plan''
    \ENDIF
 \end{algorithmic}
\end{algorithm}

We next present the main theorem, which establishes that the complete
planning procedure in Algorithm \ref{alg: main} solves the problem
considered in this paper.
\begin{theorem} \label{thm: solution correctness}
Consider a WTS $T$, an LTL specification $\varphi$, an AP sequence of
interest $v$, a desired proportion $P_v^{\mathrm{des}}$, a tolerance $\delta$, and
a cost bound $d$, as given in Problem \ref{problem: 1}. Let
$\mathcal{A}_{\varphi}$ be an NBA corresponding to $\varphi$,
$T_{\otimes}$ the product system of $T$ and
$\mathcal{A}_{\varphi}$, and $\mathcal{A}_c$ the cost automaton
associated with $T_{\otimes}$ and $d$, as defined in Definitions
\ref{def: buchi}, \ref{def: product system}, and
\ref{def: cost automaton}, respectively.
If Algorithm \ref{alg: main}
returns an infinite path
$\tau^*:=\tau_{\mathrm{pre}}^*
\odot(\tau_{\mathrm{suf}}^*)^\omega$, then $\tau^*$ solves Problem
\ref{problem: 1}.
\hfill \rule{1.5mm}{1.5mm}
\end{theorem}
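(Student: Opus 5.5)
Since Algorithm~\ref{alg: main} returns a path only through Step~4, the plan is to trace back what the returned pair $(\tau_{\mathrm{pre}}^*,\tau_{\mathrm{suf}}^*)=\mathit{best\_plan}$ must satisfy and check conditions (i)--(iii) of Problem~\ref{problem: 1} one at a time. A return in Step~4 requires $\mathit{best\_diff}\le\delta<+\infty$, so $\mathit{best\_plan}$ was assigned at least once inside Step~3. The pair stored at the last such assignment comes from a path $\tilde{\tau}_c\in\textsc{AllPaths}_{\mathcal{A}_c}(q_{c,1},q_{c,2})$ with $q_{c,1}=(\pi_{\otimes},c_1)$, $q_{c,2}=(\pi_{\otimes},c_2)$ and $c_1<c_2$. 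Its truncation $\tilde{\tau}_{c,\mathrm{suf}}$ passed the accepting-state test, and $\tau_{\mathrm{pre}}^*=\text{ReconstructPrefix}(\mathcal{A}_c,q_{c,1})$.

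\emph{Condition (i).} By Lemma~\ref{lem: suffix cycle}, $\tau_{\mathrm{suf}}^*=\operatorname{Proj}_T(\tilde{\tau}_{c,\mathrm{suf}})$ is a suffix cycle of $T$. The test $\operatorname{States}(\tilde{\tau}_{c,\mathrm{suf}})\cap Q_{c,F}\neq\emptyset$ makes it a candidate accepting suffix cycle in the sense of Definition~\ref{def: candidate cycle}. Lemma~\ref{lem: accepting prefix suffix}, applied with $q_c(n)=q_{c,1}$, then yields $\operatorname{trace}(\tau^*)\models\varphi$. The same lemma also shows that $\tau^*$ is a genuine path of $T$ in prefix--suffix form.

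\emph{Condition (ii).} This is where I expect the main care to be needed, because $J(\tau^*)$ sums the transition costs over the prefix, over the suffix, and over the closing transition back to $\pi(n)$. First, I would show that the concatenation $\tilde{\tau}_{c,\mathrm{pre}}\odot\tilde{\tau}_c$ is a single path in $\mathcal{A}_c$ from $Q_{c,0}$ to $q_{c,2}$. By Definition~\ref{def: cost automaton}, the cost component along any $\mathcal{A}_c$ path starts at $0$ and increases exactly by $w_{\otimes}$ at each step, and $w_{\otimes}$ equals the WTS weight. Telescoping along this path gives
\[
c_2=\sum_{\text{prefix}} w + \sum_{\text{suffix incl. closing edge}} w .
\]
The final transition $q_c(m-1)\to q_{c,2}$ in $\tilde{\tau}_c$ projects precisely to the closing edge $\pi(n+l)\to\pi(n)$ used in $J$. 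Hence $J(\tau^*)=c_2\le d$, since $q_{c,2}\in Q_c\subseteq\Pi_{\otimes}\times[0,d]$.

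Two subtleties must be addressed here. The first is that the prefix produced by $\mathrm{prev}$ ends at $q_{c,1}$ with accumulated cost exactly $c_1$, which holds by the invariant of Algorithm~\ref{alg: cost automaton}. The second is that the input $u(k)$ used in $J$ is the one witnessing each transition; since $w_{\otimes}$ is defined through $w(\pi,\pi')$ and $T$ is deterministic, I would read $J$'s summands as exactly these edge weights. If $T$ allows several inputs between the same pair of regions, I would choose the input attaining the weight used in $w_{\otimes}$.

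\emph{Condition (iii).} Whenever $\mathit{best\_plan}$ is updated, $\mathit{best\_diff}$ is set simultaneously to $|P_v-P_v^{\mathrm{des}}|$ for that same $\tilde{\tau}_{\mathrm{suf}}$. The value $P_v$ is computed by exactly formula~\eqref{eq: visit proportion} on $\operatorname{trace}(\tau_{\mathrm{suf}}^*)$, so it equals $P_v(\tau^*)$ from Definition~\ref{def: long run visit}. The Step~4 check therefore gives $|P_v(\tau^*)-P_v^{\mathrm{des}}|=\mathit{best\_diff}\le\delta$. Together with (i) and (ii), this shows $\tau^*$ solves Problem~\ref{problem: 1}.
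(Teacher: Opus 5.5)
Your proposal is correct and follows the same route as the paper. Lemmas~\ref{lem: suffix cycle} and~\ref{lem: accepting prefix suffix} give condition (i); the accumulated cost splits as $c_1$ along the reconstructed prefix plus $c_2-c_1$ along one suffix traversal (your telescoping over the concatenated $\mathcal{A}_c$ path), so $J(\tau^*)=c_2\le d$ gives (ii); and the simultaneous update of $\mathit{best\_diff}$ and $\mathit{best\_plan}$ together with the Step~4 test gives (iii). The remaining points you spell out ($\mathit{best\_plan}\neq\bot$ because $\mathit{best\_diff}\le\delta<+\infty$, the $\mathrm{prev}$ invariant fixing the prefix cost at $c_1$, and the choice of input reconciling $w_{\otimes}$ with $J$) are details the paper's proof leaves implicit.
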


\begin{proof}
    Suppose that Algorithm \ref{alg: main} returns
$\tau^*=\tau_{\mathrm{pre}}^*
\odot(\tau_{\mathrm{suf}}^*)^\omega$.
The returned suffix $\tau_{\mathrm{suf}}^*$ is generated from a finite
cost-automaton path $\tilde{\tau}_c^*$ from
$q_{c,1}=(\pi_{\otimes},c_1)$ to
$q_{c,2}=(\pi_{\otimes},c_2)$, where $c_1<c_2$. Let
$\tilde{\tau}_{c,\mathrm{suf}}^*$ be obtained by removing the last
state $q_{c,2}$ from $\tilde{\tau}_c^*$. The algorithm retains this
path only if
$\operatorname{States}(\tilde{\tau}_{c,\mathrm{suf}}^*)
\cap Q_{c,F}\neq\emptyset$.
Therefore,
$\tau_{\mathrm{suf}}^*
=\operatorname{Proj}_{T}
(\tilde{\tau}_{c,\mathrm{suf}}^*)$
is a candidate accepting suffix cycle according to Definition
\ref{def: candidate cycle}.

Moreover, $\tau_{\mathrm{pre}}^*$ is reconstructed from $q_{c,1}$ by
Algorithm \ref{alg: reconstruct}. Hence, Lemma
\ref{lem: accepting prefix suffix} applies and gives
$\operatorname{trace}(\tau^*)\models\varphi$. Thus, condition (i) of
Problem \ref{problem: 1} is satisfied.

Next, because $q_{c,2}=(\pi_{\otimes},c_2)\in Q_c$ and
$Q_c\subseteq\Pi_{\otimes}\times[0,d]$, it follows that $c_2\leq d$.
By Definition \ref{def: cost automaton}, the cost accumulated along the
reconstructed prefix from an initial state to $q_{c,1}$ is $c_1$.
Similarly, the cost accumulated along the finite path from $q_{c,1}$
to $q_{c,2}$, which corresponds to one traversal of
$\tau_{\mathrm{suf}}^*$, is $c_2-c_1$. Therefore,
$J(\tau^*)=c_1+(c_2-c_1)=c_2\leq d$.
Thus, condition (ii) of Problem \ref{problem: 1} is satisfied.

Finally, for each candidate accepting suffix cycle, Algorithm
\ref{alg: main} computes its long-run visit proportion according to
Definition \ref{def: long run visit}. Since the long-run visit
proportion of a prefix--suffix path is determined by its periodically
repeated suffix, the value associated with the returned path satisfies
\[
P_v(\tau^*)
=
\frac{
\operatorname{num}
\bigl(v,\operatorname{trace}(\tau_{\mathrm{suf}}^*)\bigr)|v|
}{
\left|\operatorname{trace}(\tau_{\mathrm{suf}}^*)\right|
}.
\]
Furthermore, whenever $\mathit{best\_plan}$ is updated, the algorithm
sets
$\mathit{best\_diff}
=|P_v(\tau^*)-P_v^{\mathrm{des}}|$ for the corresponding selected plan.
Since the algorithm returns $\tau^*$ only if
$\mathit{best\_diff}\leq\delta$, we obtain
$|P_v(\tau^*)-P_v^{\mathrm{des}}|\leq\delta$.
Thus, condition (iii) of Problem \ref{problem: 1} is satisfied.

Consequently, $\tau^*$ satisfies all three conditions of Problem
\ref{problem: 1} and is therefore a feasible solution.
\end{proof}

Theorem~\ref{thm: solution correctness} guarantees that
any path returned by Algorithm~\ref{alg: main}
is a feasible solution to Problem~\ref{problem: 1}.
Intuitively, the three requirements of Problem~\ref{problem: 1} are ensured as follows. 
Lemma~\ref{lem: accepting prefix suffix}
ensures that the returned prefix--suffix path satisfies
the LTL specification. The cost-automaton construction
guarantees that its overall cost does not exceed $d$,
while the algorithm's return condition ensures that
its long-run visit proportion differs from
$P_v^{\mathrm{des}}$ by at most $\delta$.
We next investigate the optimality of the solution after Steps 1--3 in Algorithm \ref{alg: main} with respect to the desired long-run visit proportion.
Before studying the optimality of the solution, we
first define its feasible domain as follows.
\begin{definition}
\label{def: product feasible paths}
Consider the product system $T_\otimes$ in
Definition~\ref{def: product system}, with initial-state
set $\Pi_{\otimes,0}$ and accepting-state set
$\Pi_{\otimes,F}$.
For the cost bound $d\in \mathbb{R}_{>0}$ in Problem~\ref{problem: 1},
define the set of cost-bounded accepting WTS paths
induced by $T_\otimes$ as
\begin{equation}
\label{eq: product feasible WTS paths}
\begin{aligned}
\mathcal{F}_d^\otimes
:= \bigl\{
&\tau=\operatorname{Proj}_{T}^{\otimes}(\tau_\otimes)
\;\bigm|\;
\tau_\otimes
=\tau_{\otimes,\mathrm{pre}}
\odot(\tau_{\otimes,\mathrm{suf}})^\omega
\\
&\in\operatorname{Path}^{\omega}(T_\otimes),
\quad
\tau_{\otimes,\mathrm{pre}}(0)\in\Pi_{\otimes,0},
\\
&\tau_{\otimes,\mathrm{suf}}\text{ forms a cycle as in
Definition~\ref{def: pre-suf struct}},
\\
&\operatorname{States}(\tau_{\otimes,\mathrm{suf}})
\cap\Pi_{\otimes,F}\neq\emptyset,
~ J(\tau)\le d
\bigr\},
\end{aligned}
\end{equation}
where $J$ is defined in
Definition~\ref{def: pre-suf struct}.
\hfill \rule{1.5mm}{1.5mm}
\end{definition}





Intuitively, $\mathcal{F}_d^\otimes$ consisting of WTS prefix--suffix
paths induced by cost-bounded accepting paths in the product system. 
Next, the optimality analysis under the domain $\mathcal{F}_d^\otimes$ is discussed in the following corollary.
\begin{corollary}
\label{cor: product optimality}
Suppose that Steps 1--3 of Algorithm \ref{alg: main} yield
$\mathit{best\_plan}\neq\bot$. Let
$\bar{\tau}
=
\bar{\tau}_{\mathrm{pre}}
\odot
(\bar{\tau}_{\mathrm{suf}})^\omega$
be the prefix--suffix path induced by $\mathit{best\_plan}$ after implementing Steps 1--3. Then
$\bar{\tau}$ minimizes the deviation from the desired long-run visit
proportion among all cost-bounded accepting prefix--suffix paths
represented by the product system; that is,
\begin{equation}
\left|P_v(\bar{\tau})-P_v^{\mathrm{des}}\right|
=
\min_{\tau\in\mathcal{F}_d^{\otimes}}
\left|P_v(\tau)-P_v^{\mathrm{des}}\right|,
\label{eq: product optimality}
\end{equation}
where $\mathcal{F}_d^{\otimes}$ denotes the set of all prefix--suffix
WTS paths induced by reachable accepting paths in the product system
whose costs do not exceed $d$, as introduced in Definition \ref{def: product feasible paths}.
\hfill \rule{1.5mm}{1.5mm}
\end{corollary}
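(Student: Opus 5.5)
The argument is a two-sided inclusion between the set of plans enumerated in Step 3 of Algorithm \ref{alg: main} and the feasible domain $\mathcal{F}_d^{\otimes}$. Once we know that every enumerated plan lies in $\mathcal{F}_d^{\otimes}$, and that every element of $\mathcal{F}_d^{\otimes}$ has the same long-run visit proportion as some enumerated plan, the corollary follows. Indeed, Step 3 keeps the running minimum of $|P_v-P_v^{\mathrm{des}}|$ over the finite enumerated set, so $\bar\tau$ attains the minimum over a set whose image under $P_v$ coincides with $P_v(\mathcal{F}_d^{\otimes})$. In particular, the minimum in \eqref{eq: product optimality} exists, since the enumerated set is finite and nonempty (because $\mathit{best\_plan}\neq\bot$).

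\emph{Soundness (enumerated $\subseteq\mathcal{F}_d^{\otimes}$).} This repeats the proof of Theorem \ref{thm: solution correctness}. Each retained $\tilde\tau_{c,\mathrm{suf}}$ projects to a product-system cycle, by Lemma \ref{lem: suffix cycle} lifted to $T_\otimes$. The prefix from ReconstructPrefix projects to a product path starting in $\Pi_{\otimes,0}$. The suffix meets $\Pi_{\otimes,F}$ because it meets $Q_{c,F}$. The cost satisfies $J=c_2\le d$. Hence the induced WTS path is in $\mathcal{F}_d^{\otimes}$.

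\emph{Completeness (every $\tau\in\mathcal{F}_d^{\otimes}$ is matched).} Take $\tau=\operatorname{Proj}_T^{\otimes}(\tau_{\otimes,\mathrm{pre}}\odot\tau_{\otimes,\mathrm{suf}}^\omega)$ with $J(\tau)\le d$.
\begin{enumerate}
\item Lift the concatenated finite product path $\tau_{\otimes,\mathrm{pre}}$ followed by one traversal of $\tau_{\otimes,\mathrm{suf}}$ and the closing return to $\pi_\otimes(n)$ into $\mathcal{A}_c$ by accumulating costs. Since all weights are positive and the total is $J(\tau)\le d$, every partial sum is $\le d$. Hence every lifted state lies in $Q_c$ by Definition \ref{def: cost automaton}, every transition lies in $\to_c$, and these states are discovered by Algorithm \ref{alg: cost automaton}, whose BFS explores all of $\to_c$.
\item This gives $q_{c,1}=(\pi_\otimes(n),c_1)$ and $q_{c,2}=(\pi_\otimes(n),c_2)$ with $c_1<c_2$, together with a path in $\textsc{AllPaths}_{\mathcal{A}_c}(q_{c,1},q_{c,2})$ whose trimmed version hits $Q_{c,F}$ and whose WTS projection is exactly $\tau_{\mathrm{suf}}$. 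This path is therefore enumerated in Step 3.
\item The algorithm pairs it with the prefix from ReconstructPrefix$(\mathcal{A}_c,q_{c,1})$, which may differ from $\tau_{\mathrm{pre}}$. However, this prefix also reaches $q_{c,1}$ and so has accumulated cost $c_1$. Moreover, $P_v$ in Definition \ref{def: long run visit} depends only on $\operatorname{trace}(\tau_{\mathrm{suf}})$. So the enumerated plan has the same $P_v$ as $\tau$, and its cost $c_2=J(\tau)$ is unchanged.
\end{enumerate}

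The main obstacle is the cost bookkeeping in step 1 of the completeness argument. One must confirm that the cost $J(\tau)$ in Definition \ref{def: pre-suf struct}, which sums transitions $k=0,\dots,n+l$ including the closing edge back to $\pi(n)$, equals the accumulated cost $c_2$ at the repeated product state. One must also confirm that the mismatch between $w(\pi,u,\pi')$ and $w_\otimes$, which drops $u$, causes no issue; determinism of $T$ resolves this. A further subtlety is that $\operatorname{Proj}_T^{\otimes}$ may map different product paths to the same WTS path. This is harmless because $P_v$ is a function of the WTS suffix only, and $\mathcal{F}_d^{\otimes}$ is defined as the set of projected paths.
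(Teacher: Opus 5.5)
Your argument is correct and is essentially the paper's: its contradiction proof is exactly your completeness direction (the cost-automaton lift of the suffix of any $\hat\tau\in\mathcal{F}_d^{\otimes}$ is among the paths enumerated in Step~3, and $P_v$ depends only on the suffix). Your soundness direction adds the membership $\bar\tau\in\mathcal{F}_d^{\otimes}$, which the equality (as opposed to mere $\le$) requires and which the paper leaves implicit. One side remark is wrong: determinism of $T$ gives at most one successor per state--input pair, not at most one input per pair $(\pi,\pi')$, so it does not make $w(\pi,u,\pi')$ independent of $u$; the identity $c_2=J(\tau)$ instead rests on the paper's tacit convention that $w(\pi,\pi')$, and hence $w_\otimes$, is a well-defined input-independent cost.
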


\begin{proof}
Suppose, for contradiction, that there exists
$\hat{\tau}\in\mathcal{F}_d^{\otimes}$ such that
\begin{equation}
\left|P_v(\hat{\tau})-P_v^{\mathrm{des}}\right|
<
\left|P_v(\bar{\tau})-P_v^{\mathrm{des}}\right|.
\label{eq: product optimality contradiction}
\end{equation}

Since $\hat{\tau}\in\mathcal{F}_d^{\otimes}$, its reachable accepting
product-system suffix cycle induces a finite path in the cost automaton
between two states
$\hat q_{c,1}=(\hat\pi_{\otimes},\hat c_1)$ and
$\hat q_{c,2}=(\hat\pi_{\otimes},\hat c_2)$ satisfying
$\hat c_1<\hat c_2\leq d$. Moreover, the corresponding finite
cost-automaton path contains at least one state in $Q_{c,F}$.

Steps 1--3 of Algorithm \ref{alg: main} enumerate every product state,
every admissible pair of accumulated costs, and every finite
cost-automaton path between the corresponding cost-augmented states.
Therefore, the candidate suffix associated with $\hat{\tau}$ is
examined by the algorithm, and its deviation
$|P_v(\hat{\tau})-P_v^{\mathrm{des}}|$ is compared with
$\mathit{best\_diff}$.

Since $\mathit{best\_plan}$ stores a candidate with the smallest
deviation among all enumerated candidates, the selected path
$\bar{\tau}$ must satisfy
\begin{equation}
\left|P_v(\bar{\tau})-P_v^{\mathrm{des}}\right|
\leq
\left|P_v(\hat{\tau})-P_v^{\mathrm{des}}\right|,
\end{equation}
which contradicts \eqref{eq: product optimality contradiction}.
Therefore, no such $\hat{\tau}$ exists, and
\eqref{eq: product optimality} holds.
\end{proof}
Corollary~\ref{cor: product optimality} establishes
the optimality of the path selected by Steps 1--3 of
Algorithm~\ref{alg: main} within $\mathcal{F}_d^{\otimes}$
with respect to the deviation from the desired long-run
visit proportion.
Intuitively, Steps 1--3 of Algorithm~\ref{alg: main}
systematically examine all finite cost-automaton paths
that induce candidate accepting suffix cycles of WTS, as introduced in Definition \ref{def: candidate cycle},
evaluate their associated long-run visit proportions,
and retain a plan with the smallest deviation from
$P_v^{\mathrm{des}}$.

\begin{table*}[t]
\centering
\caption{Experimental results under different desired long-run visit proportions.}
\label{tab:experimental results}
\begin{tabularx}{\textwidth}{
    >{\centering\arraybackslash}p{1.7cm}
    >{\raggedright\arraybackslash}X
    >{\centering\arraybackslash}p{1.7cm}
    >{\centering\arraybackslash}p{1.7cm}
    >{\centering\arraybackslash}p{1.7cm}
}
\toprule
Desired $P_v^{\mathrm{des}}$ &
The prefix--suffix path $\bar{\tau}
=
\bar{\tau}_{\mathrm{pre}}
\odot
(\bar{\tau}_{\mathrm{suf}})^\omega$
induced by $\mathit{best\_plan}$ after implementing Steps 1--3 in Algorithm \ref{alg: main} &
 $P_v(\bar{\tau})$ &
  Cost $J(\bar{\tau})$&
 Difference $\lvert P_v-P_v^{\mathrm{des}}\rvert$\\
\midrule
$0.200$ & $(q_0q_2q_1q_0q_1q_0q_1q_0q_1q_0q_2q_3q_0q_2q_1)^{\omega}$ & $0.200$&
19
& $0.000$
\\

$0.300$ & $(q_0q_1q_0q_1q_0q_2q_3q_0q_2q_1)^{\omega}$ & 
$0.300$ & 13
& $0.000$
\\

$0.500$ & $(q_0q_2q_3q_0q_2q_1q_0q_2q_3q_0q_2q_1)^{\omega}$ & $0.500$& 18
& $0.000$
\\

$0.700$ & $(q_0q_2q_3q_0q_2q_3q_0q_2q_1q_0q_1q_0q_2q_3q_0q_2q_3)^{\omega}$ & $0.706$ & 26
& $0.006$
\\

$0.900$ & $(q_0q_2q_3q_0q_2q_3q_0q_2q_3q_0q_2q_1q_0q_2q_3q_0q_2q_3)^{\omega}$ & $0.833$ & 29
& $0.067$
\\
$1.000$ & $(q_0q_2q_3q_0q_2q_3q_0q_2q_3q_0q_2q_1q_0q_2q_3q_0q_2q_3)^{\omega}$ & $0.833$ & 29
& $0.167$
\\
\bottomrule
\end{tabularx}
\end{table*}

\section{Experiment} \label{sec: experiment}
In this section, we implement the proposed approach on a Unitree B2 quadruped robot. The motion capabilities of the robot are abstracted by the same WTS $T$ introduced in Example \ref{eg: motivation}, as shown in Fig.\ref{fig: motivating example}. The associated experimental environment is shown in Fig. \ref{fig:env}, where the initial, upload, recharge, and gather regions are marked in dusty blue, pink, yellow and light blue, respectively.

Consider the same LTL specification $\varphi =\Box \Diamond \text{gather} ~\wedge ~\Box \Diamond \text{recharge} ~\wedge ~\Box \Diamond \text{upload}$ as described in Example \ref{eg: motivation}.
The AP sequence of interest introduced in Definition \ref{def: occ num} is chosen as $v =\{recharge\}, \emptyset, \{gather\} $, which corresponds to the WTS state sequence $q_{3}q_{0}q_{2}$, whose long-run visit proportion is regulated by specifying different desired values $P_v^{\mathrm{des}}$.
The cost $J(\tau)$, defined in Definition \ref{def: pre-suf struct}, represents the energy consumption associated with the prefix–suffix structure of $\tau$ and is constrained by $J(\tau)\leq 30$, where $30$ denotes the maximum allowable energy consumption.

Based on the WTS $T$ and the LTL specification $\varphi$, we apply the proposed approach according to Theorem \ref{thm: solution correctness} and Algorithm \ref{alg: main}. 
We conduct experiments with different desired long-run visit proportions $P_v^{\mathrm{des}}$ for the AP sequence of interest $v$, while fixing the allowable deviation from the desired proportion at $\delta=0.1$, as shown in Table I.

As shown in Table \ref{tab:experimental results}, the planned path is affected by the desired long-run visit proportion $P_v^{\mathrm{des}}$. As $P_v^{\mathrm{des}}$ increases from $0.2$ to $1.0$, the AP sequence of interest occupies an increasingly large proportion of the resulting prefix–suffix path, with the achieved proportion $P_v(\bar{\tau})$ increasing from $0.2$ to $0.833$. Nevertheless, $P_v(\bar{\tau})$ cannot reach $1$, because exclusively repeating the AP sequence of interest would prevent the resulting path from satisfying the LTL specification, which also requires the upload region to be visited infinitely often. Consequently, when $P_v^{\mathrm{des}}=1$, the difference $0.167$ exceeds the prescribed tolerance $\delta=0.1$. Despite this deviation, the obtained plan achieves the maximum attainable long-run visit proportion of the AP sequence of interest among all paths satisfying both the LTL specification and the cost constraint.

The experimental video can be found at \href{https://vimeo.com/1227934005}{https://vimeo.com/1227934005}.

\begin{figure}[htbp]
	\centering
	\includegraphics[width=0.35\textwidth]{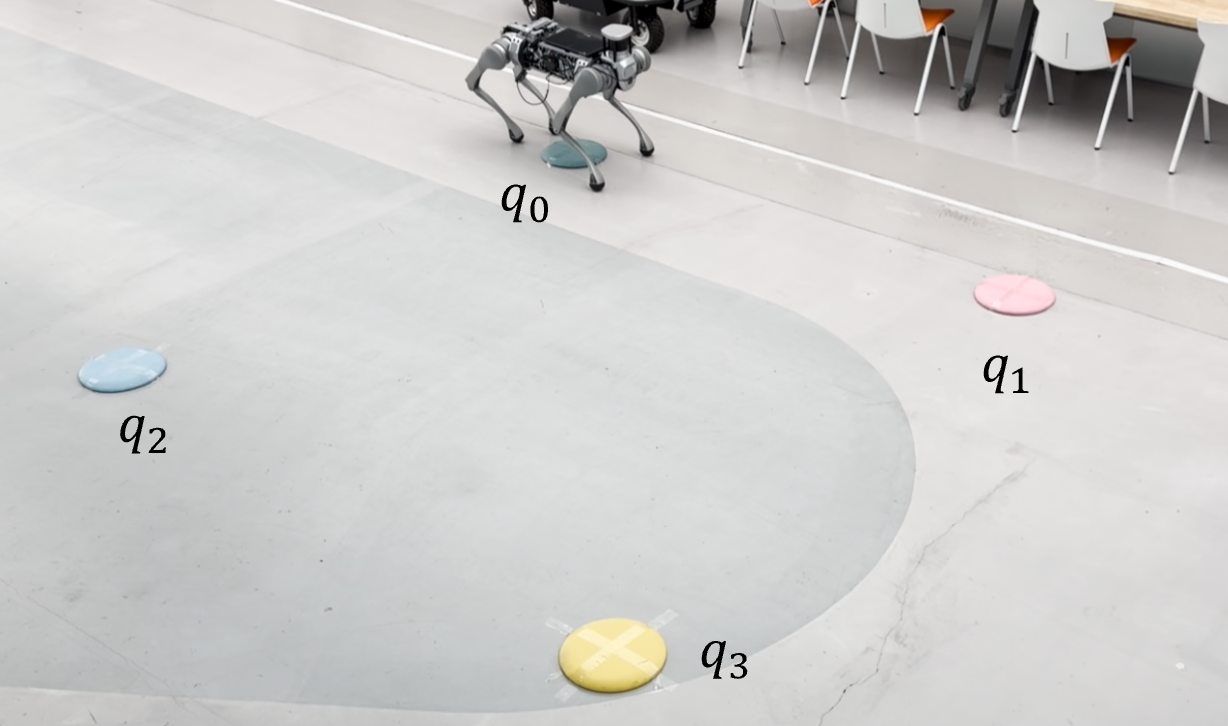}
    \vspace{-4pt}
	\caption{Experiment environment configuration.}
    \vspace{4pt}
	\label{fig:env}
\end{figure}

\section{Conclusion} \label{sec: conclude}
In this paper, we introduced a quantitative notion, termed the long-run visit proportion, for path planning under LTL specifications. Specifically, this notion characterizes the proportion of occurrences of an AP sequence of interest within the suffix cycle of an infinite prefix–suffix path. We further proposed a path-planning method that seeks a path whose long-run visit proportion is within a prescribed tolerance of a desired value, while satisfying the LTL specification and the cost constraint. Future work will extend the proposed notion and planning approach to stochastic and multi-agent systems.

\bibliographystyle{ieeetr}
\bibliography{reference}
\end{document}